\documentclass[a4paper,11pt]{article}
\usepackage{PRIMEarxiv}
\usepackage[ruled]{algorithm2e} % For algorithms

\usepackage{amsfonts}       % Blackboard math symbols
\usepackage{geometry}
\usepackage{amsmath}
\usepackage{amssymb}
\usepackage{amsthm}
\usepackage{authblk}
\usepackage{booktabs}  
\usepackage{hyperref}% Professional-quality tables
\usepackage{boxedminipage}
\usepackage{caption} 
\usepackage{cleveref}
\usepackage{comment}
\usepackage{enumitem}
\usepackage{fancyhdr}       % Header
\usepackage{graphicx}       % Graphics
\usepackage[utf8]{inputenc} % Allow UTF-8 input
\usepackage{lipsum}
\usepackage{microtype}      % Microtypography
\usepackage{multirow}
\usepackage{natbib}  
\usepackage{nicefrac}       % Compact symbols for 1/2, etc.
\usepackage{pifont}
\usepackage{pgfplots}
\usepackage[T1]{fontenc}    % Use 8-bit T1 fonts
\usepackage{thm-restate}
\usepackage{threeparttable}
\usepackage{tikz}
\usepackage{url}            % Simple URL typesetting
\usepackage{xcolor}         % Colors
\usepackage{xr}
\newcommand{\diverse}{maximally diverse}
\newcommand{\egalopt}{egalitarian-optimal}
\newcommand{\utilopt}{utilitarian-optimal}

\setlist{nolistsep}
\usetikzlibrary{arrows}
\usetikzlibrary{patterns}
\SetKw{Break}{break}

\newtheorem{lemma}{Lemma}%
\newtheorem{definition}{Definition}%
\newtheorem{theorem}{Theorem}%
\newtheorem{corollary}{Corollary}%

\newtheorem{claim}{Claim}

\newcommand{\rurals}{\mathcal{S}}

\allowdisplaybreaks

\renewcommand{\top}{\text{top}}

\newcommand{\numM}{\eta}

\newcommand{\stabset}{\mathcal{M}}

\newcommand{\w}{\omega}

\newcommand{\divv}{d^{\mathrm{div}}}

\newcommand{\rural}{\mathcal{R}}

\title{\bf Maximally Diverse Stable Matchings: Optimizing Arbitrary Institutional Objectives}

\newtheorem{remark}{Remark}

\newcommand{\worst}{\text{worst}}
\newcommand{\Fam}{\mathcal{F}}

\author[1,2]{Gergely Csáji}
\author[3,4]{Zhaohong Sun}

\affil[1]{Centre for Economic and Regional Studies, Eötvös Loránd University, Hungary}
\affil[2]{Budapest University of Technology and Economics, Hungary}
\affil[3]{Kyushu University, Japan}
\affil[4]{AI Lab, CyberAgent, Japan}

\date{\vspace{-10mm}}

\begin{document}
\maketitle

\begin{abstract}
Stable matching theory is the foundation of centralized clearinghouses worldwide, from school choice programs to medical residency allocations. However, incorporating complex distributional goals—such as multi-dimensional diversity quotas or sibling co-assignment guarantees—often compromises stability or renders the problem computationally intractable. 
The existing literature typically addresses this tension by weakening stability to accommodate distributional constraints. In contrast, the reverse question remains largely unexplored: if we restrict attention to stable matchings, to what extent can such distributional objectives be achieved?

In this paper, we resolve this tension by introducing a general, polynomial-time algorithmic framework to optimize arbitrary institutional (or even two-sided) objectives within the set of stable matchings. We prove that for any polynomial-time computable set functions $g_i$ evaluating the assigned students at institutions $i \in I$, a stable matching minimizing either the utilitarian objective $\sum_{i\in I} g_i$ or the egalitarian objective $\max_{i\in I} g_i$ can be found efficiently. Our approach leverages the structural properties of stable matchings, mapping arbitrary set functions to linear edge weights. We apply this theorem to efficiently solve major open practical problems: finding stable matchings that minimally violate overlapping diversity quotas (under both total and maximum violations) and maximizing the number of sibling families assigned to the same institution. Even when the distributional objective is prioritized, our algorithm helps to quantify the "price of stability", i.e., the gap between the maximally diverse matching and the maximally diverse stable matching.
\end{abstract}

% \gnote{ADD REFS!}

\section{Introduction}
The theory of stable matchings, pioneered by Gale and Shapley, has been immensely successful in market design. It serves as the primary mechanism for assigning students to schools or medical graduates to residency programs. In these settings, stability—the condition that no student and institution would rather be matched to each other than their assigned counterparts—is widely considered the gold standard for fairness. It ensures that there is no "justified envy" among participants, which is legally and practically essential for the survival of centralized clearinghouses.

% However, modern matching markets increasingly face complex, real-world distributional demands that standard Deferred Acceptance (DA) algorithms cannot handle. Important examples include diversity constraints and assigning siblings together guarantees. School districts often wish to enforce lower and upper quotas on various socioeconomic, geographic, or skill-based categories to ensure diverse classrooms. Simultaneously, they face immense pressure from parents to assign siblings to the same school. 

However, modern matching markets increasingly face complex real-world distributional requirements that standard Deferred Acceptance (DA) algorithms cannot adequately accommodate. Representative examples include matching with diversity constraints and matching with inseparable members. For instance, school districts and universities often aim to admit an integrated cohort of students from diverse backgrounds by imposing lower and upper quotas across socioeconomic, geographic, or skill-based categories. In daycare matching markets, facilities face strong pressure from parents to place siblings in the same school. In refugee resettlement, localities must allocate refugee families under limited resources across multiple dimensions including housing and employment, while ensuring that all family members are assigned together.

% Traditionally, accommodating these constraints has led to severe theoretical and computational roadblocks. Rigidly enforcing hard quotas can result in nonexistence of stable matchings. Furthermore, when students belong to multiple overlapping categories (e.g., a low-income student who is also a minority), even deciding if a feasible matching exists under hard quotas is NP-hard. To circumvent this, the literature has heavily focused on modifying institutional choice functions to handle "soft" bounds. Yet, these mechanisms often sacrifice true stability, leading to modified definitions of justified envy that can be difficult to explain to the public and complex to implement.

Traditionally, accommodating these constraints has led to significant theoretical and computational challenges. Consider matching with diversity constraints as an example. Rigid enforcement of hard quotas can lead to the nonexistence of stable matchings. Moreover, when students belong to multiple overlapping categories, even determining whether a feasible matching exists under hard quotas becomes NP-hard~\citep{arnosti2024explainable}. To address these difficulties, the literature has largely focused on modifying institutional choice functions to accommodate “soft” bounds. However, these approaches often come at the cost of standard stability, resulting in weakened notions of stability that may be difficult to interpret for the public and challenging to implement in practice.

% In this paper, we take a fundamentally different approach. Rather than distorting the rules of stability to satisfy rigid distributional goals, we treat these goals as soft optimization objectives over the set of all classical stable matchings. We ask: among the lattice of stable matchings, can we efficiently find the one that best satisfies our complex, non-linear institutional goals?
In contrast, the reverse question remains largely unexplored: if we restrict attention to stable matchings, to what extent can distributional objectives be achieved? In this paper, we take a fundamentally different approach. Rather than modifying stability to accommodate rigid constraints, we treat distributional goals as soft optimization objectives over the set of classical stable matchings. A natural question is whether, within the lattice of stable matchings, one can efficiently identify a matching that optimally satisfies complex, non-linear institutional objectives.

% Remarkably, we show that the answer is yes. We provide a highly general algorithmic framework capable of minimizing arbitrary polynomial-time computable set functions over the set of institutions. 

\subsection{Our contributions}
We propose a general algorithmic framework for minimizing arbitrary polynomial-time computable set functions over the set of institutions.
Specifically, given any set function $g_i : 2^{E(i)} \to \mathbb{R}$ for each institution $i$, we show that a stable matching minimizing the utilitarian cost $\sum_{i \in I} g_i(M(i))$ or the bottleneck cost $\max_{i \in I} g_i(M(i))$ can be computed in polynomial time. 
This result rests on a deep structural property of stable matchings. Building on the foundational theorem of \citet{baiou2000stable}, which states that the symmetric difference of any institution's assignment in two stable matchings consists of exactly its best or worst students, we analyze the subsets of students an institution can possibly be matched with across all stable matchings. We prove that any institutional objective function $g_i:2^{E(i)}\to \mathbb{R}$ can be perfectly represented as a set of linear edge weights when restricted to these "stable sets". This effectively reduces the highly complex problem of optimizing set functions to the classical, efficiently solvable problem of finding a minimum-weight stable matching.

The power of this general framework allows us to immediately resolve several highly practical problems in market design:
\begin{itemize}
\item 
\textbf{Maximally Diverse Matchings:} We efficiently find stable matchings that minimize the total (utilitarian) or maximum (egalitarian) diversity quota violations, even when students belong to multiple intersecting categories (under both one-to-all and one-to-one counting rules). 

\item \textbf{Sibling Guarantees}: By defining $g_i$ as the negative count of fully matched sibling families, our framework efficiently finds the stable matching that maximizes the number of families kept together.

\end{itemize}
These objectives could also be handled simultaneously (e.g., by prioritizing one or weighing each objective by its importance) by our approach. We also show that the optimal solutions form sublattices and admit student-optimal and institute-optimal solutions. 

Alongside the general framework, we also provide a simpler and faster Deferred Acceptance-based algorithm to find minimax violation diverse stable matchings without the need for weight transformations and minimum weight stable matching computation.

The implications of these results are twofold. First, when stability is the primary objective and the distributional objective is secondary, our algorithm can be applied directly, as described below. Second, even when the distributional objective is the main focus, our algorithm remains useful, as it allows us to quantify the gap between the maximally diverse matching and the maximally diverse stable matching.

\subsection{Paper Structure}
The remainder of the paper is organized as follows. In Section~\ref{sec:relwork}, we review related work. Section~\ref{sec:Preliminaries} outlines the preliminaries and the basic model. In Section~\ref{sec:DA-maxviol} , we provide specialized DA-based algorithms for minimizing diversity violations with single and overlapping categories. Section~\ref{sec:main} introduces our main theoretical result: the general algorithm for minimizing arbitrary set functions over stable matchings.

\section{Related Work}\label{sec:relwork}

A central challenge in school choice with diversity constraints is how to incorporate distributional requirements while preserving desirable properties such as fairness and efficiency. Early work studied this problem in environments where each student has a single type. \citet{AbSo03b} incorporated type-specific upper quotas into deferred acceptance (DA) and top trading cycles (TTC), achieving within-type fairness and Pareto efficiency among feasible matchings, while \citet{Abdu05a} established strategy-proofness of DA under responsive preferences over quota-satisfying sets. However, imposing hard quotas can lead to unintended welfare consequences \citep{Koji12a} and, more broadly, may render fairness and non-wastefulness incompatible \citep{EHYY14a}. This has led to alternative approaches based on minority reserves \citep{HYY13a} and soft bounds with modified fairness notions \citep{EHYY14a}, as well as a characterization of admissible choice functions \citep{EcYe15a}.

A second line of work highlighted an additional difficulty: in many applications, students belong to multiple (overlapping) types. In such settings, two conventions arise. Under the one-for-all convention, a student consumes the reserved seats of all her types, whereas under one-for-one, she occupies only one \citep{SoYe22a}. The one-for-all convention is computationally challenging and often incompatible with fairness or incentive properties \citep{BFIM10a,GNKR19a,AGS20a}, while the one-for-one convention is more widely adopted. Existing approaches under one-for-one typically assign types in a greedy or sequential manner—via strict preferences, tie-breaking, or dynamic priorities \citep{AyTu16a,KHIY17a,BCC+19a,CEE+19a,EHYY14a}—which may fail to fully achieve diversity goals when types overlap.
To address this limitation, \citet{SoYe20a} proposed the smart reserves approach, which allows flexible assignment of types and improves the utilization of reserved seats. However, their framework is restricted to a single reserve tier and cannot capture richer constraints. Subsequent work extended this approach to multiple reserve tiers, allowing for both lower and upper quotas as well as hierarchical priorities across types \citep{AzSu21a,AzSu25a}.
Under the one-for-all convention, alternative approaches rely on heuristic or restricted mechanisms, often at the cost of fairness, strategy-proofness, or general applicability \citep{GNKR19a,AGS20a,BCC+19a,Aziz19b}. Related papers consider ratio constraints \citep{NgVo19a}, sequential reserve processing \citep{DPS20a}, and vertical and horizontal reservation systems \citep{SoYe22a}, each addressing specific institutional features but not the general problem of optimally allocating overlapping types.

Closely related is the classified stable matching problem introduced by \citet{Huan10a}, where each institution can classify the set of students independently, and enforce upper and lower quotas for each category. He showed that deciding the existence of a classified stable matching is NP-hard, but becomes efficiently solvable when the categories form nested systems. This was extended later by~\citet{fleiner2016matroid} and~\citet{yokoi2017generalized} to matroidal settings.  
However, the algorithms by \citet{Huan10a}, \citet{fleiner2016matroid}, and \citet{yokoi2017generalized} define stability by requiring that blocking coalitions themselves satisfy the constraints, which effectively restricts justified envy only within categories. In our approach, stability is defined in the original sense, without taking into account type or category-specific quotas.

Another critical dimension of modern matching markets is the accommodation of joint preferences, most notably the assignment of couples in labor markets and siblings in school choice and daycare. Foundational work by \citet{roth1984evolution} highlighted that when participants express joint preferences, stable matchings may fail to exist. Furthermore, \citet{ronn1990np} established that the problem of determining whether a stable matching exists in the presence of couples is NP-complete. This hardness persists under additional restrictions, such as couples only ranking pairs within the same hospital \cite{McMa10a}, or when all hospitals share a common ranking of doctors \cite{BIS11a}. In random markets, \cite{KPR13a} proved that when the market grows large and the number of couples grows slowly (e.g., $O(\sqrt{n})$), a stable matching exists with probability approaching one. On the algorithmic side, the NRMP uses a heuristic based on the incremental algorithm of \cite{roth1999redesign}. Alternative approaches include the Scarf algorithm for finding fractional matchings \cite{BFI16a}, integer and constraint programming methods \cite{MOPU07a,BMM14a,MMT17a}.

Similarly to resident allocation with couples, in school choice and daycare matching, families face substantial logistical costs when siblings are separated, necessitating priority structures that strongly favor joint assignments. \citet{sun2025probabilistic} demonstrate that while sibling complementarities technically jeopardize theoretical stability much like couples, stable matchings can be found with high probability in large random markets where institutional priorities are highly correlated.

Optimization over stable matchings has been widely considered in previous literature, but mostly for problems that can be easily represented by linear edge weights $\w:E\to \mathbb{R}$ over any set of edges. The first algorithm to compute a minimum $\w$-weight stable matching is due to~\citet{irving1987efficient}. They proved that stable matchings can be compactly represented by a rotation poset and used this result to show that computing a minimum weight stable matching reduces to finding a minimum weight cut in a digraph. This algorithm was extended to the many-to-many case by~\citet{eirinakis2012finding}.
A different approach to compute optimal stable matchings is to describe the convex hull of stable matchings as a Linear Program (LP). \citet{vande1989linear} provided the first description for the stable matching polytope, which was later simplified and extended by~\citet{rothblum1992characterization} and~\citet{roth1993stable}. \citet{baiou2000stable} provided a polyhedral characterization and algorithms for maximum-weight stable many-to-many matchings. 
 Regarding fairness objectives, \citet{irving1987efficient} gave an algorithm to find an egalitarian stable matching (minimizing the total rank of all agents on both sides), \citet{gusfield1987three} provided one for the minimum regret stable matching (minimizing the rank of the worst agent on either side). \citet{kato1993complexity} proved that finding a sex-equal stable matching (that minimizes the differences in total ranks of the two sides) however, is NP-hard. So is finding a median stable matching, where each agent receives its median partner (counted with multiplicity) among all stable matchings as shown by \cite{cheng2010understanding}.

Recently, \citet{faenza2025minimum} developed a complex framework for optimizing  objectives over stable matchings by representing them as minimum cuts in a digraph. However, their framework does not give efficient algorithms generally, as their approach to deciding minimum cut representability requires checking whether certain (global) complex conditions hold for all (exponentially many) stable matchings. They provide efficient algorithms for some simpler cases, where this can be done efficiently, such as when the goal is to decide if  pairs of siblings can be matched together in a stable matching. In contrast, our approach is much simpler and is efficient for basically any reasonable institutional objective functions and thus can extend the algorithm of \citet{faenza2025minimum} to arbitrary-sized families of siblings, as well as many other problems.

%\zhnew{
%\cite{Ronn90a} first showed that determining the existence of stable matchings with couples is NP-hard, even when each hospital has a single position. This hardness persists under additional restrictions, such as couples only ranking pairs within the same hospital \cite{McMa10a}, or when all hospitals share a common ranking of doctors \cite{BIS11a}. In random markets, \cite{KPR13a} proved that when the market grows large and the number of couples grows slowly (e.g., $O(\sqrt{n})$), a stable matching exists with probability approaching one. Extending this, \cite{ABH14a} showed that stable matchings can still be found with high probability when couples grow at rate $n^\epsilon$ for $0<\epsilon<1$, but this probability drops significantly under linear growth. To address non-existence issues, research has considered restricted preference domains. \cite{KlKl05a} introduced weak responsiveness, which guarantees existence of stable matchings with couples, while \cite{HaKo10a} showed that this implies bilateral substitutes in the matching-with-contracts framework \cite{HaMi05a}. On the algorithmic side, the NRMP uses a heuristic based on the incremental algorithm of \cite{RoVa90a}. Alternative approaches include the Scarf algorithm for finding fractional matchings \cite{BFI16a}, integer and constraint programming methods \cite{MOPU07a,BMM14a,MMT17a}, which have also been applied to daycare matching markets \cite{STM+23a,SYT+24a}.
%}

Finally, our work is related to broader models of matching with distributional constraints, including regional quotas \citep{KaKo15a,AGSW19a} and housing allocation with ethnic caps \citep{BCH+18a}, as well as a growing literature in matching and AI on constrained allocation \mbox{\citep{GIKY+16a,STY21a,ZYBY18a,ABY22a}}. These works consider different constraint structures or objectives and do not address the combination of overlapping types, flexible diversity goals, and optimal reserve allocation that we study.

%%%%%%%%%%%%%%%%%%%%%%%%%
% Section: Preliminaries
%%%%%%%%%%%%%%%%%%%%%%%%%
\section{Preliminaries}
\label{sec:Preliminaries}

In this section, we formally define the matching models, the various constraints, and the objective functions studied in this paper. We begin with the classic many-to-one stable matching model and then extend it to incorporate diversity categories and general institutional objectives.

\subsection{The Basic Matching Model and Stability}

A basic matching market consists of a finite set of students $S$ and a finite set of institutions $I$. Each institution $i \in I$ has a total capacity of $q_i$ positions. The set of acceptability relations is modeled by a bipartite graph $(S \cup I, E)$, where an edge $e = (s, i) \in E$ indicates that student $s$ and institution $i$ find each other acceptable. The set of edges incident to an agent $v \in S \cup I$ is denoted by $E(v)$. We assume without loss of generality that all $v\in S\cup I$ are incident to at least one edge.
Each student $s \in S$ has a strict preference ordering $\succ_s$ over her acceptable institutions (i.e.,
$I\cap E(s)$). Similarly, each institution $i \in I$ has a strict priority ordering $\succ_i$ over its acceptable students (i.e., $S\cap E(i)$). We assume that the option of remaining unmatched, denoted by $\emptyset$, is strictly less preferred than any acceptable partner.
A \emph{matching} $M \subseteq E$ is a subset of edges such that for every student $s \in S$, we have $|M(s)| \le 1$, and for every institution $i \in I$, we have $|M(i)| \le q_i$.
%

% \begin{definition}[Matchings and Stability]
    % \begin{itemize}
    %     \item A \emph{matching} $M \subseteq E$ is a subset of edges such that for every student $s \in S$, we have $|M(s)| \le 1$, and for every institution $i \in I$, we have $|M(i)| \le q_i$.
    %     \item Given a matching $M$, an edge $(s, i) \notin M$ \emph{blocks} $M$ if $i \succ_s M(s)$ and either $|M(i)| < q_i$ or there exists some student $s' \in M(i)$ such that $s \succ_i s'$. 
    %     \item A matching $M$ is \emph{stable} if it admits no blocking edges. We denote the set of all stable matchings by $\stabset$.
    % \end{itemize}
% \end{definition}

\begin{definition}[stable matching]

Given a matching $M$, an edge $(s, i) \notin M$ \emph{blocks} $M$ if $i \succ_s M(s)$ and either $|M(i)| < q_i$ or there exists some student $s' \in M(i)$ such that $s \succ_i s'$.
A matching $M$ is \emph{stable} if it admits no blocking edges. We denote the set of all stable matchings by $\stabset$.
\end{definition}

\paragraph{DA and the \texttt{BreakMarriage} operation.}
The standard way to compute stable matchings is 
the (student proposing) Deferred Acceptance algorithm, or DA for short. In the DA, students propose to institutions in the order of their preferences, and institutions reject their worst proposals until they satisfy their capacities. The operation \texttt{BreakMarriage} for an edge $(s,i)$ does the following. If $s$ is not matched to $i$ currently, then it does nothing. Else, it removes $s$ from $i$, thus forcing $s$ to continue proposing down her preference list. It also forbids $s$ and any student $s'$ with $s\succ_i s'$ to propose to $i$, meaning if such a student would propose to $i$, the proposal is skipped instead. The latter property ensures that $s$ will not justifiably envy any student at $i$ later, and the worst students for each institution $i$ keep weakly improving for $i$.

The key result of this paper depends on the possible set of edges for each institution $i\in I$ that can arise in a stable matching.
\begin{definition}[Stable Sets]
\label{def:stable_sets}
Let $E_i^1, \dots, E_i^{k_i}$ denote the distinct subsets of $E(i)$ such that there exists a stable matching $M$ where $M(i) = E_i^j$. We refer to these as the \emph{stable sets} of institution $i$.
\end{definition}

\subsection{Diversity Constraints and General Institutional Objectives}

A central and well-studied form of distributional constraints is matching with diversity goals. In this setting, we introduce a finite set of categories $C$, and each student $s \in S$ is associated with a subset $C(s) \subseteq C$. For each category $c \in C$, institution $i$ specifies a soft lower bound $\ell_i^c$ and a soft upper bound $u_i^c$. Let $\numM_i^c(M)$ denote the number of students of category $c$ assigned to $i$ under matching $M$. The objective is to find a stable matching that minimizes violations of these bounds, which can be formalized as follows.

\begin{definition}[Diversity Violations and Optimality]
Given a stable matching $M$, define the \emph{total diversity violation} as
\[
\divv_1(M)=\sum_{i\in I}\sum_{c\in C}\big((\ell_i^c-\numM_i^c(M))^+ + (\numM_i^c(M)-u_i^c)^+\big),
\]
and the \emph{maximum diversity violation} as
\[
\divv_\infty(M)=\max_{i\in I,\,c\in C} \big\{(\ell_i^c-\numM_i^c(M))^+,\,(\numM_i^c(M)-u_i^c)^+\big\}.
\]
A matching $M$ is called \emph{\diverse} if $\divv_1(M)=0$ (equivalently, $\divv_\infty(M)=0$). Among stable matchings, we say that $M$ is \emph{\utilopt} if it minimizes $\divv_1(M)$, and \emph{\egalopt} if it minimizes $\divv_\infty(M)$.
\end{definition}

The diversity objectives defined above arise as a special case of a more general framework based on \emph{institutional objective functions}.  
\begin{definition}[Institutional Objective Function]
For each institution $i \in I$, let $g_i : 2^{E(i)} \to \mathbb{R}$ be a set function that assigns a real-valued cost to any subset of students assigned to $i$. 
\end{definition}

Within this framework, we study the problem of finding a stable matching $M \in \mathcal{M}$ that minimizes either the utilitarian objective $\sum_{i \in I} g_i(M(i))$ or the egalitarian objective $\max_{i \in I} g_i(M(i))$. The diversity measures $\divv_1$ and $\divv_\infty$ correspond to specific choices of $g_i$. Similarly as for the above special case, among stable matchings, we say that $M$ is \emph{\utilopt} if it minimizes $\sum_{i\in I}g_i(M)$, and \emph{\egalopt} if it minimizes $\max_{i\in I}g_i(M)$.

% Here, $g_i$ assigns a score (or penalty) to a set of edges (or alternatively, the students that apply to $i$) incident to $i$; in other words, it scores the specific set of students assigned to institution $i$. We show that the structure of stable matchings surprisingly allows us to solve even this broad class of problems efficiently, whenever the functions $g_i$ are polynomial-time computable.
%
Each function $g_i$ assigns a numerical score or penalty to the subset of students matched to institution $i$. In this framework, $g_i$ serves as a quantitative measure of how well a local student composition aligns with the institution's specific distributional objectives.

To illustrate the versatility of this approach, we consider three distinct examples of distributional objective functions.

\paragraph{One-to-all counting diversity constraints.} In this setting, each student $s$ is associated with a subset of categories $c(s) \subseteq C$. Each institution $i$ imposes a lower bound $\ell_i^c$ and an upper bound $u_i^c$ for every category $c \in C$. When a student is assigned to institution $i$, they contribute to the count of \textbf{every} category they belong to. We define the distributional penalty function as:
\[
g_i(M) = \sum_{c \in C} \left( \max(0, \ell_i^c - \numM_i^c(M)) + \max(0, \numM_i^c(M) - u_i^c) \right)
\]

\paragraph{One-to-one counting diversity constraints.} In contrast, one-to-one counting requires that each student is assigned to \textbf{exactly one} of the categories for which they are eligible. Here, $g_i(M)$ represents the minimum total violation across all possible valid category assignments. Formally, let $S_i \subseteq S$ be the set of students matched to institution $i$ in $M$. We define $g_i(M)$ as the minimum over all type-assignment functions $\tau: S_i \to C$ (where $\tau(s) \in c(s)$) of:
\[
\sum_{c \in C} \left( \max(0, \ell_i^c - |\{ s \in S_i \mid \tau(s) = c \}|) + \max(0, |\{ s \in S_i \mid \tau(s) = c \}| - u_i^c) \right)
\]
In this case, the global objective $\sum_{i \in I} g_i(M)$ corresponds to the total diversity violation $\divv_1(M)$.

To compute $g_i(M)$ efficiently, we can reduce the problem to finding a maximum weight matching in a bipartite graph $H$. For each category $c \in C$, we create two sets of vertices: $i_c$ (with capacity $\ell_i^c$) and $i_c'$ (with capacity $u_i^c - \ell_i^c$). We also include a ``slack'' vertex $i''$ with capacity $q_i$. On the other side of the graph, we create a vertex for each student $s \in S_i$. 

Edges are constructed as follows:
\begin{itemize}
    \item An edge between $s$ and $i_c$ exists if $c \in c(s)$, with weight \textbf{2} (prioritizing the satisfaction of lower bounds).
    \item An edge between $s$ and $i_c'$ exists if $c \in c(s)$, with weight \textbf{1} (accounting for students within the upper bound).
    \item An edge between $s$ and $i''$ exists for all $s$ with weight \textbf{0} (representing students that contribute to a violation of the upper bound).
\end{itemize}

\begin{claim}
    The maximum weight of an $S_i$-side complete matching in $H$ is $(|S_i| + \sum_{c \in C} \ell_i^c) - R$ if and only if $g_i(M)=R$.
\end{claim}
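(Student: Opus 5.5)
The plan is to prove the equality $W^* = |S_i| + \sum_{c\in C}\ell_i^c - R^*$, where $W^*$ is the maximum weight of an $S_i$-complete matching in $H$ and $R^* = g_i(M)$ is the minimum violation over type assignments. The "if and only if" then follows at once, since $R \mapsto (|S_i|+\sum_c \ell_i^c) - R$ is injective. Throughout I assume $\ell_i^c \le u_i^c$ and that every student has $c(s)\neq\emptyset$, so that type assignments exist. An $S_i$-complete matching always exists because the slack vertex $i''$ has capacity $q_i \ge |S_i|$.

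The key tool is a per-category function. For $n\ge 0$ set
\[
f_c(n) = 2\min(n,\ell_i^c) + \min\bigl((n-\ell_i^c)^+,\, u_i^c-\ell_i^c\bigr).
\]
A three-case check ($n\le \ell_i^c$; $\ell_i^c< n\le u_i^c$; $n>u_i^c$) shows
\[
f_c(n) = n + \ell_i^c - \bigl((\ell_i^c-n)^+ + (n-u_i^c)^+\bigr).
\]
Hence, for a type assignment $\tau$ with counts $n_c=|\tau^{-1}(c)|$, we get $\sum_c f_c(n_c) = |S_i| + \sum_c \ell_i^c - \mathrm{cost}(\tau)$, because $\sum_c n_c = |S_i|$.

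\textbf{Lower bound, $W^*\ge |S_i|+\sum_c\ell_i^c-R^*$.} Take an optimal $\tau$. For each category $c$, match $\min(n_c,\ell_i^c)$ of the students with $\tau(s)=c$ to $i_c$. Match the next $\min((n_c-\ell_i^c)^+, u_i^c-\ell_i^c)$ of them to $i_c'$, and the rest to $i''$. All edges exist because $c\in c(s)$, and all capacities are respected. The weight of this matching is exactly $\sum_c f_c(n_c)$.

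\textbf{Upper bound, $W^*\le |S_i|+\sum_c\ell_i^c-R^*$.} Take any $S_i$-complete matching $N$ and build a type assignment from it. If $s$ is matched to $i_c$ or $i_c'$, set $\tau(s)=c$. If $s$ is matched to $i''$, set $\tau(s)$ to an arbitrary element of $c(s)$. For a fixed category $c$, the weight contributed by students of type $c$ is (number at $i_c$) plus (number at $i_c$ or $i_c'$). The first count is at most $\min(n_c,\ell_i^c)$ and the second is at most $\min(n_c,u_i^c)$. Since $\min(n,\ell)+\min(n,u)=f_c(n)$, the total weight of $N$ is at most $\sum_c f_c(n_c) = |S_i|+\sum_c\ell_i^c-\mathrm{cost}(\tau)$, which is at most $|S_i|+\sum_c\ell_i^c-R^*$.

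The main obstacle is this upper bound: one must check that students dumped on $i''$, who still count toward $n_c$ of their chosen type, never increase the bound. This works because $f_c$ depends only on the counts and the capacity inequalities hold regardless of where the extra students sit. A second subtlety is the convention for students with $c(s)=\emptyset$. Such a student has only the $i''$ edge (weight $0$) yet is counted in $|S_i|$, so the formula as stated would be off by one per such student. I would exclude these students explicitly or subtract their number from $|S_i|$.
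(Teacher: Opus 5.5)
Your proof is correct. Your lower-bound direction coincides with the paper's second direction: from an optimal $\tau$, fill $i_c$ first, then $i_c'$, and put the rest on $i''$.

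The upper-bound direction is where you take a different route. The paper works with a \emph{maximum}-weight complete matching $M'$ and uses exchange arguments to derive its structure: $i_c'$ is used only when $i_c$ is full, and a student sits at $i''$ only when $i_c$ and $i_c'$ are saturated for every $c\in c(s)$. From that structure it reads off a $\tau$ whose cost equals the weight deficit $a_i+b_i$ exactly. You instead bound the weight of \emph{every} complete matching using only the capacity constraints. The key ingredients are $|N(i_c)|\le\min(n_c,\ell_i^c)$, $|N(i_c)|+|N(i_c')|\le\min(n_c,u_i^c)$, and the identity $\min(n,\ell)+\min(n,u)=n+\ell-\bigl((\ell-n)^++(n-u)^+\bigr)$.

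Your weak-duality-style argument needs no optimality or exchange step, and $f_c$ makes the role of the weights $2,1,0$ transparent. Combined with the lower bound, it also shows that the $\tau$ read off from any maximum-weight matching is optimal, whatever category you choose for students at $i''$. The paper's route, in exchange, describes the structure of optimal matchings in $H$ explicitly.

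Your remarks on the implicit assumptions $\ell_i^c\le u_i^c$ and $c(s)\neq\emptyset$ are accurate. The paper relies on both without stating them, and the formula does fail when some student has no category.
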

\begin{proof}
For the first direction, choose $R$ such that $(|S_i| + \sum_{c \in C} \ell_i^c) - R$ is the weight of a maximum weight $S_i$-side complete matching $M'$ in $H$. Create a category assignment $\tau$ such that $\tau (s)=c$ if $(s,i_c)\in M'$ or $(s,i_c')\in M'$.  For $s\in S_i$ such that $(s,i'')\in M$, let $c^*$ be a category such that $i_{c^*}$ and $i_{c^*}'$ are filled in $M'$ (if there is no such $c^*$, then $M'\setminus \{ (s,i'')\} \cup \{ (s,i_c'\}$ for some $c\in c(s)$ would have larger weight than $M'$, contradiction) and let $\tau (s) = c^*$. By the maximal weight property of $M'$, we can also assume that $i_c'$ has incident edges in $M'$ only if $i_c$ is fully filled and if $(s,i'')\in M'$, then $i_c$ and $i_c'$ are fully filled for all $c\in c(s)$.

We have that if $a_i=\sum_{c\in C}(\ell_i^c-|M'(i_c)|)^+$ and $b_i=|M'(i)|$, then the weight of $M'$ is $2\cdot (\sum_{c\in C}\ell_i^c - a_i)+ 0\cdot b_i + 1\cdot (|S_i|-\sum_{c\in C}\ell_i^c +a_i-b_i)=(|S_i|+\sum_{c\in C}\ell_i^c)-a_i-b_i$, so $a_i+b_i=R$. 
Furthermore, we also have that by the construction of $\tau$ and our observation that $i_c$ gets filled before $i_c'$, 
$\sum_{c \in C} ( \max(0, \ell_i^c - |\{ s \in S_i \mid \tau(s) = c \}|) + \max(0, |\{ s \in S_i \mid \tau(s) = c \}| - u_i^c)=a_i+b_i=R$. Thus, the value of $g_i(M)$ is at most $R$.

For the other direction, let $R$ be the value of $g_i(M)$, and let $\tau$ be the corresponding optimal category assignment. We construct a feasible matching $M'$ in $H$. For each institution $i$, let the students with $\tau (s) =c$ at $i$ in $M$ be $s_1^c\succ_i \cdots \succ_i s_{k_c}^c$. We assign the best $\ell_i^c$ of them to $i_c$, the next $u_i^c-\ell_i^c$ to $i_c'$ and the rest to $i''$ in $M'$. Clearly, no vertex capacity is violated.

For institution $i$, let $a_i$ denote $\sum_{c\in C}(\ell_i^c-|M'(i_c)|)^+$ and $b_i$ denote $|M'(i'')|$. Hence, we have $R=a_i+b_i$. The weight of the edges of $M'$ incident to copies of $i$ is then $2\cdot (\sum_{c\in C}\ell_i^c - a_i)+ 0\cdot b_i + 1\cdot (|S_i|-\sum_{c\in C}\ell_i^c +a_i-b_i)= (|S_i| + \sum_{c \in C} \ell_i^c) - R$, so the maximum weight of an $S_i$-side complete matching is at least this much, and by the first direction, it is exactly this much. 
\end{proof}

\paragraph{Matching siblings together.} In this scenario, the set of students is partitioned into a set of disjoint families $\Fam = \{f_1, \dots, f_k\}$, where each family $f_j$ consists of one or more siblings. The objective is to maximize the number of families whose members are all assigned to the same institution. 

To achieve this within our minimization framework, we define the distributional penalty for each institution $i$ as the negative of the number of families fully contained within that institution:
\[
g_i(M) = -|\{ f_j \in \Fam \mid f_j \subseteq M(i) \}|.
\]
Consequently, minimizing the global objective $\sum_{i \in I} g_i(M)$ is equivalent to maximizing the total number of families kept together across all institutions. In this sense, the function $g(\cdot)$ directly rewards the co-location of siblings.

\subsection{Structural Results for Stable Matchings}

In this section, we collect several structural results on stable matchings that will be used in the analysis of our algorithms. 
We begin with a classical property of stable matchings in many-to-many markets due to~\citet{baiou2000many} (see also~\cite{fleiner2003stable}).

\begin{theorem}[\cite{baiou2000many}]\label{fact:stable_diff}
For any two stable matchings $M,M'$ and any agent $i$, the symmetric difference $M(i)\setminus M'(i)$ consists of either the best or the worst $|M(i)\setminus M'(i)|$ agents in $M(i)\cup M'(i)$ according to $\succ_i$.
\end{theorem}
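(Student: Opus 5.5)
We prove Theorem~\ref{fact:stable_diff} in the many-to-one model of this paper, treating the institution side $i$ (for a student the statement is trivial, since $|M(s)|\le 1$). Fix stable matchings $M,M'$ and an institution $i$. We may assume $M(i)\neq M'(i)$, and we write $A=M(i)\setminus M'(i)$ and $B=M'(i)\setminus M(i)$. The plan has two steps. First we show $|M(i)|=|M'(i)|$, so that $|A|=|B|$. Then we show that either every student of $A$ is preferred by $i$ to every student of $B$, or the reverse. Together these give the claim: if $A\succ_i B$ elementwise, then $A$ consists of the best $|A|$ agents of $M(i)\cup M'(i)$ outside the common part $M(i)\cap M'(i)$.

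\textbf{Step 1.} We use the rural hospitals theorem: every stable matching matches the same set of students, and every institution receives the same number of students. We can derive it from the lattice/DA structure, or argue directly. Let $M_S$ be the student-optimal stable matching. Every student weakly prefers $M_S$ to any stable $M$, so every student matched in $M$ is matched in $M_S$, and $|M_S|\ge |M|$. If some institution $i$ had $|M(i)|<|M_S(i)|$, we obtain a contradiction as follows. Some student $s\in M_S(i)\setminus M(i)$ then prefers $i$ to $M(s)$ (student optimality together with $M(s)\ne i$). Since $i$ is undersubscribed in $M$, the edge $(s,i)$ blocks $M$. A counting argument over institutions completes this. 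Hence $|M(i)|=|M'(i)|$, and so $|A|=|B|$.

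\textbf{Step 2 (main obstacle).} We must rule out a ``crossing'' configuration in which some $a\in A$, $b\in B$ satisfy $b\succ_i a$, while some $a'\in A$, $b'\in B$ satisfy $a'\succ_i b'$. The standard tool is the decomposition of students by which matching they prefer. Let $S^+=\{s: M(s)\succ_s M'(s)\}$ and $S^-=\{s: M'(s)\succ_s M(s)\}$.

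First take $b\in B$. If $b$ preferred $i$ to $M(b)$, then stability of $M$ would force $i$ to prefer every member of $M(i)$ to $b$, since $i$ is full because $|M(i)|=|M'(i)|>0$. Symmetrically, take $a\in A$. If $a$ preferred $i$ to $M'(a)$, then $i$ would prefer all of $M'(i)$ to $a$.

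The key lemma is the ``opposite interests'' property. If student $s\in A$ prefers $M(s)=i$ to $M'(s)$, then $i$ must weakly prefer $M'$ on that side. To establish this, we would use the join/meet lattice argument: the matching giving each student the better of $M(s)$ and $M'(s)$ is stable, and each institution gets its worse set from it (Gale--Sotomayor style). From this we deduce that all students in $A$ lie in the same class ($S^+$ or $S^-$) relative to $i$.

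Suppose all of $A\subseteq S^+$. Then every $b\in B$ lies in $S^-$ as well, again by the lattice consistency: $i$ is simultaneously the ``worse side'' for both. In this case each $b\in B$ prefers $i=M'(b)$ to $M(b)$, so by the observation above, $i$ ranks all of $M(i)$, and in particular all of $A$, above $b$. Thus $A$ consists of the best agents. The case $A\subseteq S^-$ is symmetric and gives the worst agents.

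The delicate part is proving that all students of $A$ fall into a single class. For this we would use the standard counting/cycle argument. Pair each $a\in A$ with its institution in $M'$, follow the alternating path through $M\triangle M'$, and show that a mixed class produces a blocking pair in either $M$ or $M'$.
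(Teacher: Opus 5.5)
The paper gives no proof of this statement (it is imported from \cite{baiou2000many}), so your argument has to stand on its own. Step~1 is fine, and your two local observations are correct. (That $i$ is full in $M$ follows from stability, since otherwise $(b,i)$ blocks outright, not from $|M(i)|>0$.) Your reading of the claim as ``best or worst within $M(i)\triangle M'(i)$'' is also the right one: the literal ``best in $M(i)\cup M'(i)$'' can fail when common students outrank $A$.

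Step~2, however, has a sign error and then leaves the real content unproved. Your own observations show that $A\cap S^+\neq\emptyset$ and $B\cap S^-\neq\emptyset$ cannot both hold. An $a\in A\cap S^+$ puts all of $M'(i)\supseteq B$ above $a$ by stability of $M'$. A $b\in B\cap S^-$ puts all of $M(i)\supseteq A$ above $b$ by stability of $M$. So the case you analyse, $A\subseteq S^+$ with $B\subseteq S^-$, is impossible. When $A\subseteq S^+$ one in fact has $B\subseteq S^+$, and then $A$ is the \emph{worst} block, not the best.

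The dichotomy you need is $A\cup B\subseteq S^+$ or $A\cup B\subseteq S^-$. That is, you must exclude some $a\in A\cap S^-$ coexisting with some $b\in B\cap S^+$, where both students prefer to be away from $i$. Nothing local at $i$ rules this out, and neither of your suggested routes does it as written. Invoking the student-join with ``each institution getting its worse set'' presupposes that $M(i)$ and $M'(i)$ are comparable for $i$ in exactly the sense being proved. In the many-to-one model that lattice fact is normally obtained via the related marriage market or via a statement of this very kind. The alternating-path argument is only announced, not carried out.

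The missing step is a global counting argument that uses only what you already have. For each institution $j$ put $A_j=M(j)\setminus M'(j)$ and $B_j=M'(j)\setminus M(j)$, and set $A_j^{\pm}=A_j\cap S^{\pm}$, $B_j^{\pm}=B_j\cap S^{\pm}$.

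\begin{enumerate}
\item The local observation gives $A_j^+=\emptyset$ or $B_j^-=\emptyset$. Together with $|A_j|=|B_j|$, this yields $|A_j^+|\le|B_j^+|$ for every $j$.
\item Each $s\in S^+$ lies in exactly one $A_j^+$, namely $j=M(s)$. Since $s$ is matched in $M'$ by the rural hospitals theorem, it also lies in exactly one $B_j^+$, namely $j=M'(s)$. Hence $\sum_j|A_j^+|=|S^+|=\sum_j|B_j^+|$.
\item Combining the two, $|A_j^+|=|B_j^+|$ and therefore $|A_j^-|=|B_j^-|$ for every $j$.
\end{enumerate}

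Now take $j=i$. If $A_i^+=\emptyset$, then $B_i^+=\emptyset$, so every $b\in B$ lies in $S^-$ and all of $M(i)$ beats $B$; thus $A$ is the best part of the symmetric difference. If $B_i^-=\emptyset$, then $A_i^-=\emptyset$, so every $a\in A$ lies in $S^+$ and all of $M'(i)$ beats $A$; thus $A$ consists of the worst $|A|$ agents of $M(i)\cup M'(i)$.
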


The next result characterizes the student-optimal stable matching under additional constraints. In particular, it allows us to incorporate forbidden edges in the Deferred Acceptance algorithm (DA) via \texttt{BreakMarriage} operations. 

\begin{theorem}[\cite{gusfield1987three}]\label{thm:forbiddenopt}
Let $F$ be a set of forbidden edges such that there exists a stable matching that avoids $F$. Then the student-proposing deferred acceptance algorithm, combined with \emph{\texttt{BreakMarriage}} operations for edges in $F$, produces a stable matching $M_F$ that avoids $F$ and is weakly preferred by all students to any other stable matching that avoids $F$.
\end{theorem}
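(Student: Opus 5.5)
Theorem~\ref{thm:forbiddenopt} is the classical result of \citet{gusfield1987three}, stated as a cited fact. Still, here is how I would establish it, in the many-to-one setting with \texttt{BreakMarriage} as defined above. The plan is to maintain the invariant of the ordinary DA proof: no edge that belongs to some stable matching avoiding $F$ is ever rejected or skipped. Let $\stabset_F$ be the set of stable matchings avoiding $F$; it is nonempty by assumption. I would run student-proposing DA, and whenever the current tentative matching contains an edge $(s,i)\in F$, apply \texttt{BreakMarriage}$(s,i)$. The process terminates because every proposal is made at most once. The output $M_F$ avoids $F$ by construction.

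\textbf{Optimality.} I would prove by induction over the run that no student $s$ is ever rejected or blocked by an institution $i$ with $(s,i)\in M$ for some $M\in\stabset_F$. It then follows that every student ends at least as well off in $M_F$ as in any such $M$.

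\emph{Ordinary rejections.} These are handled exactly as in Gale--Shapley. Suppose $i$ rejects $s$ because $i$ holds $q_i$ better students. By induction, each of these students weakly prefers $i$ to their partner in $M$. Since $s\in M(i)$, some student $s'$ among them is not in $M(i)$, and $s'\succ_i s$. So $(s',i)$ blocks $M$, a contradiction.

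\emph{\texttt{BreakMarriage} steps.} This is the main obstacle. Consider \texttt{BreakMarriage}$(s,i)$ with $(s,i)\in F$. It forbids $s$ and every $s'$ with $s\succ_i s'$ from proposing to $i$. I must show that no such $s'$ is matched to $i$ in any $M\in\stabset_F$. Suppose $(s',i)\in M$ with $s\succ_i s'$. Since $M$ avoids $F$, $s\notin M(i)$. By the inductive hypothesis, $s$ has not been rejected by any $M$-stable partner ranked above $i$. Because $s$ is currently matched to $i$, this gives $i\succeq_s M(s)$, and in fact $i\succ_s M(s)$ since $M(s)\neq i$. Then $(s,i)$ blocks $M$, because $i$ holds the worse student $s'$. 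This is a contradiction.

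\textbf{Stability.} It remains to show that $M_F$ is stable. The only non-standard skipped proposals are those of students $s'\preceq_i s$ after \texttt{BreakMarriage}$(s,i)$. I would argue that $i$ ends full with students better than $s$. Skipped students therefore do not block, and all other rejections are justified as in standard DA.

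To see that $i$ ends full, compare with any $M\in\stabset_F$. We have $s\notin M(i)$, and $s$ does not prefer her $M$-partner to $i$, since otherwise she would have been rejected there earlier, contradicting the invariant. So by stability of $M$, the institution $i$ is full in $M$ with students better than $s$. Every such student $t$ weakly prefers $M_F(t)$ to $i$ by optimality. By Theorem~\ref{fact:stable_diff}-type arguments, or simply the rural-hospitals-style counting applied to $M_F$, $i$ ends full with students better than $s$.

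This final counting step is the part I would check most carefully.
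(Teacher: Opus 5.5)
The paper only cites this result from \citet{gusfield1987three} and gives no proof, so your argument has to stand on its own. Your invariant and the resulting student-optimality are correct. The invariant says that no student is ever rejected, skipped, or broken off by an institution she is matched to in some $M\in\stabset_F$.

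\textbf{The gap.} It is the step you flagged, and it cannot be closed as stated. With \texttt{BreakMarriage} as you (and the paper) define it, students $s'\prec_i s$ who are already \emph{held} by $i$ when $(s,i)$ is broken are not removed, and nothing forces them out later.

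Counting only yields fullness. Fix $M\in\stabset_F$. Every $t\in M_F(j)\setminus M(j)$ has $j\succ_t M(t)$, so $j$ is full in $M$. Hence $|M_F(j)|\le|M(j)|$ for all $j$. Since every student matched in $M$ is matched in $M_F$, equality holds for every $j$. This does not give ``better than $s$''.

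\textbf{Counterexample.} Let $i$ have capacity $2$ and $j,k,\ell$ capacity $1$. Student preferences are $a\colon i$; $b\colon i\succ\ell$; $c\colon i\succ k$; $d\colon j\succ i$; $e\colon k\succ j$. Priorities are $a\succ_i d\succ_i b\succ_i c$, $e\succ_j d$, $c\succ_k e$, and $F=\{(b,i)\}$.
\begin{itemize}
\item Ordinary DA returns $\{(a,i),(d,i),(e,j),(c,k),(b,\ell)\}$, which is stable and avoids $F$.
\item Now use the proposal order $c\to i$, then $b\to i$ (break $(b,i)$; $c$ stays), then $b\to\ell$, $a\to i$, $d\to j$, $e\to k$.
\item This run ends in $\{(a,i),(c,i),(d,j),(e,k),(b,\ell)\}$. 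It avoids $F$ and $i$ is full, but $(b,i)$ blocks it because $c\prec_i b$.
\end{itemize}
The student $d\in M(i)$ left $i$ along the improving cycle $c\to i$, $d\to j$, $e\to k$, and $c$ was never displaced.

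\textbf{The fix.} Let \texttt{BreakMarriage}$(s,i)$ also reject every currently held $s'\prec_i s$. Your \texttt{BreakMarriage} step already shows that no $M\in\stabset_F$ contains such an $(s',i)$, so the invariant and optimality survive. After the break, $i$ only ever holds students $\succ_i s$. The counting above then shows that $i$ ends with $q_i$ of them.

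You also need this for your remark that the remaining rejections are ``justified as in standard DA'', because a break can leave $i$ below capacity. Let $t$ be a student rejected by $i$ before a break $(s,i)$. Until that break, $i$ stays full and everyone it holds is better than $t$, so $t\prec_i s$. Hence $t$ is also covered by the fullness claim.

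In the paper's own applications no held student is worse than the broken one, so the two versions of \texttt{BreakMarriage} coincide there. In DC-DA, $F$ is downward closed at each institution and every current edge in $F$ is broken. In Step~1, the broken student is always the worst one held.
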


Finally, we recall the Rural Hospitals Theorem, which characterizes invariance properties across stable matchings.

\begin{theorem}[Rural Hospitals Theorem \citep{RoSo90a}]\label{thm:rural}
Let $M$ and $M'$ be any two stable matchings. Then, we have that:
\begin{enumerate}
\item  the set of matched residents is identical under $M$ and $M'$;
\item each hospital is assigned the same number of residents under both matchings; and
\item  any hospital that is undersubscribed in one stable matching is assigned the same set of residents in all stable matchings.
\end{enumerate}
\end{theorem}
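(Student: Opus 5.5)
The plan is to compare an arbitrary stable matching $M$ with the student-optimal stable matching $M_S$ produced by the student-proposing DA. Its existence and optimality follow from Theorem~\ref{thm:forbiddenopt} with $F=\emptyset$: every student weakly prefers $M_S$ to any stable matching. Proving each of the three claims for the pair $(M,M_S)$ is enough, since the claims for $M,M'$ then follow by transitivity through $M_S$.

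\textbf{Step 1 (students).} If a student $s$ is matched in $M$, then $M_S(s)\succeq_s M(s)\succ_s \emptyset$, so $s$ is also matched in $M_S$. Hence the set of students matched in $M$ is contained in the set matched in $M_S$, and $\sum_i |M(i)| \le \sum_i |M_S(i)|$.

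\textbf{Step 2 (institutions).} I next show that $|M_S(i)|\le |M(i)|$ for every $i\in I$. Suppose instead that $|M_S(i)|>|M(i)|$ for some $i$.
\begin{itemize}
\item Then $i$ is undersubscribed in $M$, because $|M(i)|<|M_S(i)|\le q_i$.
\item Moreover, there is a student $s\in M_S(i)\setminus M(i)$.
\item Since $M_S(s)=i\neq M(s)$ and $s$ weakly prefers $M_S$, strict preferences give $i\succ_s M(s)$.
\item So $(s,i)$ blocks $M$, because $i$ has a free seat in $M$. This contradicts the stability of $M$.
\end{itemize}
Summing over $i$ gives $\sum_i|M_S(i)|\le\sum_i|M(i)|$. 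Combined with Step 1, the totals are equal. The per-institution inequalities must then all be equalities, which is claim 2. Also, the containment from Step 1 together with equal cardinalities gives equality of the sets of matched students, which is claim 1.

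\textbf{Step 3 (undersubscribed institutions).} Let $i$ be undersubscribed in some stable matching. By claim 2 it is undersubscribed in every stable matching, in particular in $M$.
\begin{itemize}
\item Rerunning the argument of Step 2, any $s\in M_S(i)\setminus M(i)$ would form a blocking pair $(s,i)$ for $M$. Hence $M_S(i)\subseteq M(i)$.
\item Since $|M_S(i)|=|M(i)|$, we get $M_S(i)=M(i)$.
\end{itemize}
As $M$ was arbitrary, every stable matching assigns $i$ the set $M_S(i)$.

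The only delicate point is Step 2. There one must use the student-optimality of $M_S$ to get the \emph{strict} preference $i\succ_s M(s)$, which follows from strictness of preferences and $M(s)\ne i$. The rest is counting.
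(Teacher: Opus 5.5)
Your proof is correct and complete. The paper gives no proof of this statement; it is imported from \citet{RoSo90a} and used as a black box. The natural comparison is therefore with the textbook argument.

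That argument usually passes to the related one-to-one market, in which each institution $i$ is split into $q_i$ individual positions. It then invokes the marriage-market fact that the set of single agents is invariant. That fact is typically derived by combining the student-optimality of $M_S$ with its institution-pessimality (the opposition-of-interests lemma). Finally, the conclusions are transferred back to the many-to-one market.

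You instead stay in the many-to-one model throughout. You need only the existence and student-optimality of $M_S$, which you may legitimately take from Theorem~\ref{thm:forbiddenopt} with $F=\emptyset$, since stable matchings $M,M'$ are assumed to exist. The key replacement is your Step~2. An institution with $|M(i)|<|M_S(i)|$ has a vacancy in $M$, so any $s\in M_S(i)\setminus M(i)$, who strictly prefers $i$ to $M(s)$, blocks $M$. This per-institution inequality does the work of the opposition-of-interests lemma, and the same blocking-pair observation settles claim~3 in one line.

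Your route is shorter and self-contained: no position copies, no institution-optimal matching, no lattice facts. The textbook route instead buys reuse of the one-to-one theory, at the price of the bookkeeping needed to move statements between the two markets.
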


\section{An algorithm for \egalopt\ stable matching with diversity constraints
% with overlapping categories
}\label{sec:DA-maxviol}

% In this setting, deciding if a maximally diverse matching exists is NP-hard (see e.g.. However, most interestingly, among stable matchings, we will show that even an $\ell_{\infty}$-most diverse stable matching can be found efficiently. Such an algorithm may be crucially useful for applications, where diversity constraints are important, but the structure of categories is arbitrary.

 % \gnote{Gergely: What is nice about this one is that it does not need a computation of a maximum weight stable matching, nor preprocessing, just a DA $\to$ keep as a warm-up and most practical algorithm.}
 
In this section, we study a model in which each student may belong to multiple, possibly overlapping categories. Although deciding whether a \diverse\ matching exists is NP-hard \citep{arnosti2024explainable}, we show that determining the existence of a \utilopt\ or \egalopt\ matching can be accomplished in polynomial time. This tractability stems from the fact that the search can be confined to the lattice of stable matchings, thereby substantially reducing the space of candidate solutions.

% For each institution $i\in I$ and category $c\in C$, let $u_i^{\ne c}=|M(i)|-\ell_i^c$ for an arbitrary stable matching $M$ ($|M(i)|$ is always the same) \zhnew{denote the upper bound of students who can be assigned to types other than $c$ under a maximally-diverse matching}. We say that $M$ \emph{satisfies} $u_i^{\ne c}$, if  $|\{ s\in M(i)\mid c\notin c(s)\}|\le u_i^{\ne c}$.

% \begin{claim}\label{claim:bounds}
%     If a stable matching $M$ satisfies all bounds $u_i^{\ne c}$ and $u_i^c$, then it is maximally-diverse.
% \end{claim}
% \begin{proof}
%     Suppose for the contrary that a lower bound $\ell_i^c$ is violated by a stable matching $M$. Then, we get that $|\{ s\in M(i)\mid c\notin c(s)\}|=|M(s)|-\numM_i^c(M)> |M(i)|-\ell_i^c$. Hence, $u_i^{\ne c}$ is violated by $M$, contradiction.
% \end{proof}

\subsection{Diversity-Constrained Deferred Acceptance}\label{sec:upper-bound-alg}

We next introduce the Diversity-Constrained Deferred Acceptance (DC-DA) algorithm to determine the existence of a stable matching that satisfies all diversity constraints. At a high level, the algorithm starts from a stable matching that ignores diversity constraints and iteratively modifies its edges in the preference graph, while preserving stability, until it either finds a stable and \diverse\ matching or concludes that no such matching exists.

% \paragraph{Transformation:} 
We first demonstrate that the existence of a stable matching satisfying lower bounds can be reduced to an equivalent problem involving only upper bounds. This reduction relies on the \emph{Rural Hospitals Theorem}, which guarantees that the number of students assigned to an institution $i$, denoted by $|M(i)|$, is invariant across all stable matchings. 
Specifically, for each institution $i \in I$ and category $c \in C$, we define a transformed upper bound:
\[
u_i^{\neq c} = |M(i)| - \ell_i^c
\]
This value represents the maximum number of students not belonging to category $c$ that institution $i$ can admit while still satisfying the original lower bound $\ell_i^c$. Formally, we say that a stable matching $M$ \emph{satisfies} the diversity upper bounds if for every $i \in I$ and $c \in C$:
\[
\left| \{ s \in M(i) \mid c \notin c(s) \} \right| \le u_i^{\neq c} \space \text{ and } \space |\{ s \in M(i) \mid c \in c(s) \}| \le u_i^{c}.
\]
This transformation simplifies the search process, as we can now treat the problem as finding a stable matching that does not exceed a set of maximum capacities for category and non-category members.

\begin{lemma}\label{lemma:DAFE:bounds}
    A stable matching $M$ is \diverse\ if and only if it satisfies all diversity upper bounds $u_i^c$ and $u_i^{\ne c}$.
\end{lemma}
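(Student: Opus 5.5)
The plan is to prove the two directions separately, working with a fixed stable matching $M$ and a fixed institution $i\in I$ and category $c\in C$. Everything reduces to one counting identity. By definition, every student in $M(i)$ either belongs to $c$ or does not. Hence
\[
\left|\{ s\in M(i)\mid c\notin c(s)\}\right| = |M(i)| - \numM_i^c(M).
\]
By the Rural Hospitals Theorem (Theorem~\ref{thm:rural}), $|M(i)|$ is the same for every stable matching. So $u_i^{\ne c}=|M(i)|-\ell_i^c$ is well defined, independent of which stable matching was used to define it, and we may use the same $M$ in both places.

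Substituting the identity, the condition $|\{s\in M(i)\mid c\notin c(s)\}|\le u_i^{\ne c}$ becomes $|M(i)|-\numM_i^c(M)\le |M(i)|-\ell_i^c$. This is equivalent to $\numM_i^c(M)\ge \ell_i^c$, i.e.\ the lower bound term $(\ell_i^c-\numM_i^c(M))^+$ vanishes. Likewise, the condition $|\{s\in M(i)\mid c\in c(s)\}|\le u_i^c$ is literally $\numM_i^c(M)\le u_i^c$, i.e.\ the upper bound term $(\numM_i^c(M)-u_i^c)^+$ vanishes.

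\emph{Both directions.} $M$ is \diverse\ iff $\divv_1(M)=0$. Since $\divv_1(M)$ is a sum of nonnegative terms, this holds iff every term $(\ell_i^c-\numM_i^c(M))^+$ and $(\numM_i^c(M)-u_i^c)^+$ is zero. By the equivalences above, that happens iff $M$ satisfies all bounds $u_i^{\ne c}$ and $u_i^c$. Both implications follow at once, and for the ``if'' direction we can also argue by contradiction: a violated lower bound $\ell_i^c$ forces $|M(i)|-\numM_i^c(M)>|M(i)|-\ell_i^c=u_i^{\ne c}$.

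\emph{Main obstacle.} The only subtle point is the well-definedness of $u_i^{\ne c}$. It is defined via ``an arbitrary stable matching,'' and it must be checked that it agrees with $|M(i)|-\ell_i^c$ for the particular stable $M$ under consideration. Part 2 of Theorem~\ref{thm:rural} settles this, and it is exactly why the statement is restricted to stable matchings.
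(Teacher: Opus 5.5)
Your proof is correct and follows the same route as the paper. Both use the counting identity $|\{s\in M(i)\mid c\notin c(s)\}| = |M(i)|-\numM_i^c(M)$, together with the Rural Hospitals Theorem to identify $u_i^{\ne c}$ with $|M(i)|-\ell_i^c$ for the given stable $M$. The paper only writes out the contradiction argument for the ``if'' direction, whereas your term-by-term equivalence also makes the easy ``only if'' direction explicit.
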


\begin{proof}
    Suppose toward a contradiction that a stable matching $M$ satisfies $u_i^{\ne c}$ but violates the lower bound $\ell_i^c$. By definition, a violation of the lower bound implies $|\{\, s \in M(i) \mid c \in c(s) \,\}| < \ell_i^c$. It follows that:
    \[
    |\{\, s \in M(i) \mid c \notin c(s) \,\}| = |M(i)| - |\{\, s \in M(i) \mid c \in c(s) \,\}| > |M(i)| - \ell_i^c.
    \]
    Substituting the definition of $u_i^{\ne c}$, we get $|\{\, s \in M(i) \mid c \notin c(s) \,\}| > u_i^{\ne c}$, which contradicts the assumption that $M$ satisfies $u_i^{\ne c}$.
\end{proof}

We next present a modified Deferred Acceptance algorithm designed to compute a stable matching that respects the upper bounds $u_i^c$ and $u_i^{\neq c}$ for all institutions $i \in I$ and categories $c \in C$. The procedure maintains a dynamically updated set of \emph{forbidden edges} $F$, which acts as a structural constraint on the matching process. 

Whenever a diversity bound is violated at an institution, the algorithm identifies the offending edges, those associated with the least-preferred students contributing to the violation, and adds them to $F$. This effectively compels the institution to reject these applicants, triggering a chain of new proposals. By resuming the Deferred Acceptance process on this increasingly restricted preference graph, the algorithm monotonically traverses the stable matching lattice. The procedure continues until it either converges on a feasible stable matching that satisfies all bounds or exhausts the set of available edges, thereby proving that no such matching exists.

\paragraph{Algorithm DC-DA}

\paragraph{Initialization:} We initialize the set of forbidden edges $F_0 = \emptyset$ and let $j$ index the iterations, starting with $j = 1$. First, we run the standard student-proposing Deferred Acceptance (DA) algorithm to obtain the student-optimal stable matching $M_0$. Let
\[
\rural = \{\, i \in I \mid |M_0(i)| < q_i \,\}
\]
denote the set of unfilled institutions, and let
\[
\rurals = \{\, s \in S \mid M_0(s) = \emptyset \,\}
\]
denote the set of unmatched students. By the Rural Hospitals Theorem, the sets of unmatched students and unfilled positions are invariant across all stable matchings. Therefore, if there exists some $i \in \rural$ that violates an upper bound $u_i^c$ or $u_i^{\ne c}$ in $M_0$, the algorithm terminates and outputs ``no solution.''
% By Rural theorem, the set of students matched to any undersubscribed institution $i$ remains the same in any stable matching.

\paragraph{Check Bounds:}
In each iteration $j$, we evaluate whether the current stable matching $M_{j-1}$ satisfies the upper bounds $u_i^c$ and $u_i^{\neq c}$ for all institutions $i \in I$ and categories $c \in C$. If all diversity and capacity constraints are met, the algorithm terminates and returns $M_{j-1}$ as a valid solution.

If a violation is detected, we update the forbidden edge set to force the institution to adjust its composition as follows.

If a bound $u_i^c$ is violated at institution $i$, let $s$ be the student of category $c$ who is least-preferred by $i$ in the current matching $M_{j-1}$. To reduce the count, we forbid $s$ and all students ranked lower than $s$ by $i$:
    \[ F_j = F_{j-1} \cup \{ (s', i) \in S \times I \mid s \succeq_i s' \}. \]
Similarly, if a bound $u_i^{\neq c}$ is violated, let $s$ be the least-preferred student who does \emph{not} belong to category $c$. We update the forbidden set analogously:
    \[ F_j = F_{j-1} \cup \{ (s', i) \in S \times I \mid s \succeq_i s' \}. \]

After updating $F_j$, we increment $j$ and resume the student-proposing DA procedure on the reduced preference profile (any rejection that happened so far with forbidden edges $F_{j-1}\subset F_j$ must also happen with $F_j$ by Theorem~\ref{thm:forbiddenopt}).

\paragraph{Resuming DA:}
While there exist active matches that are now forbidden, i.e., $F_j \cap M_{j-1} \neq \emptyset$, we perform a \texttt{BreakMarriage} operation for each $(s, i) \in F_j \cap M_{j-1}$ by removing $s$ from institution $i$ and leaving the student temporarily unmatched. The Deferred Acceptance algorithm then resumes from this partial matching. Displaced students continue their proposal sequence, moving down their preference lists while strictly bypassing any edges currently in $F_j$. 

The algorithm monitors for two specific failure conditions that signal the non-existence of a feasible stable matching:
\begin{itemize}
\item \textbf{Displacement of Invariant Students:} If a student $s \notin \rurals$ (one who was matched in the initial student-optimal matching) exhausts their preference list without finding a valid assignment.
\item \textbf{Violation of Unfilled Positions:} If any student proposes to an institution $i \in \rural$, effectively attempting to alter the invariant occupancy of an under-subscribed institution.
\end{itemize}

If either condition is met, the algorithm terminates and outputs ``no solution,'' as the Rural Hospitals Theorem implies no stable matching can satisfy the given bounds. Otherwise, the procedure converges to a new stable matching $M_j$. We then increment $j$ and return to the bound-checking step to ensure all diversity constraints are satisfied.

\begin{theorem}\label{thm:DC-DA}
The DC-DA algorithm terminates in $\mathcal{O}(|E|)$ time,  where $E \subseteq S \times I$ denotes the set of edges in the preference graph. It outputs a \diverse\ stable matching if one exists, and outputs ``no solution'' otherwise. Furthermore, the resulting matching is the student-optimal stable matching within the set of all \diverse\ stable matchings. 
\end{theorem}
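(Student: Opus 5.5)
The plan is to prove one central invariant: \emph{every forbidden edge added by DC-DA is avoided by every \diverse\ stable matching}. Correctness and student-optimality will then follow from Theorem~\ref{thm:forbiddenopt}, and the running time from monotonicity of the DA proposals.

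\textbf{Step 1 (invariant).} I would argue by induction on $j$. Suppose every \diverse\ stable matching avoids $F_{j-1}$, and suppose some \diverse\ stable matching $M^*$ exists. By Theorem~\ref{thm:forbiddenopt}, the matching $M_{j-1}$ is student-optimal among stable matchings avoiding $F_{j-1}$. Hence every student weakly prefers $M_{j-1}$ to $M^*$. By the standard opposition-of-interests property, extended to stable matchings avoiding $F_{j-1}$ (equivalently, via Theorem~\ref{fact:stable_diff}), each institution $i$ weakly prefers $M^*(i)$ to $M_{j-1}(i)$. Concretely, $M^*(i)\setminus M_{j-1}(i)$ consists of students better for $i$ than those in $M_{j-1}(i)\setminus M^*(i)$.

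Now let the bound $u_i^c$ be violated at $i$, and let $s$ be the worst category-$c$ student in $M_{j-1}(i)$. I claim that $M^*(i)$ contains no student $s'$ with $s\succeq_i s'$. Since $|M^*(i)|=|M_{j-1}(i)|$ by Theorem~\ref{thm:rural}, and the students that $M^*$ replaces at $i$ are the worst ones of $M_{j-1}(i)$, suppose $M^*(i)$ contained some $s'\preceq_i s$. Then $M^*(i)$ would keep every student of $M_{j-1}(i)$ ranked above $s'$, including all category-$c$ students ranked at least $s$. It would therefore contain at least $u_i^c+1$ category-$c$ students, contradicting diversity.

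The case of $u_i^{\ne c}$ is identical, using Lemma~\ref{lemma:DAFE:bounds}. Forbidding all students below $s$, and not merely $s$ itself, is what makes this argument go through, and it also keeps $F_j$ consistent with the threshold nature of \texttt{BreakMarriage}. The key difficulty here is making the Baïou--Balinski structure available for matchings that are stable only relative to the restricted edge set. I would handle this by observing that $M^*$ and $M_{j-1}$ are both stable in the instance with $F_{j-1}$ deleted, since $F_{j-1}$ consists of threshold cuts. I would then apply Theorem~\ref{fact:stable_diff} there, checking that $M_{j-1}$ is genuinely stable in the original instance as well.

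\textbf{Step 2 (correctness).} If DC-DA outputs a matching, it is stable by Theorem~\ref{thm:forbiddenopt}, and \diverse\ by the bound check together with Lemma~\ref{lemma:DAFE:bounds}. If it reports ``no solution,'' there are three triggers to handle. The invariant shows that any \diverse\ stable matching avoids $F_j$, so a failure during resumption would contradict Theorem~\ref{thm:forbiddenopt}. The displacement failure and the proposal-to-$\rural$ failure both contradict Theorem~\ref{thm:rural}, since the matched student set and the sets $M(i)$ for $i\in\rural$ are invariant. The initial check fails only if the invariant set $M_0(i)$ for $i\in\rural$ already violates a bound.

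\textbf{Step 3 (student-optimality).} When the algorithm returns $M_j$, every \diverse\ stable matching avoids $F_j$ by the invariant. Theorem~\ref{thm:forbiddenopt} then says $M_j$ is weakly preferred by all students to every stable matching avoiding $F_j$, and in particular to every \diverse\ one.

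\textbf{Step 4 (running time).} Each student traverses her preference list only forward, because rejections remain valid as $F$ grows. Each edge is therefore proposed along at most once, and each edge enters $F$ at most once. This gives $\mathcal{O}(|E|)$ work overall, provided each violation check is charged to the edges it forbids. That amortization, with incremental per-category counters at each institution, is the one routine detail I would spell out.
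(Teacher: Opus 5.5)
Your invariant in Step~1 is correct and is essentially the paper's argument. The paper exhibits a category-$c$ student $s''\in M_{j-1}(i)\setminus M^*(i)$ and shows that $(s'',i)$ blocks $M^*$; your observation that the students dropped from $M_{j-1}(i)$ are worse than everything in $M^*(i)$ is the same fact. Steps~3 and~4, and your analysis of the failure triggers, are also fine whenever a maximally diverse stable matching exists.

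The gap is in Step~2, where you assert that a returned matching is stable ``by Theorem~\ref{thm:forbiddenopt}''. That theorem assumes some stable matching avoids $F$. Your invariant supplies one only when a maximally diverse stable matching exists. The clause ``outputs `no solution' otherwise'' concerns exactly the opposite case. There you have no guarantee that any stable matching avoids $F_T$. So nothing in your argument rules out the algorithm finishing without triggering a failure check and returning a matching that meets all bounds but is unstable. You mention ``checking that $M_{j-1}$ is genuinely stable in the original instance'' in passing, but you never give this argument, and it is where the real work of soundness lies.

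What is missing is a direct proof, not depending on any diverse stable matching existing, that every provisional $M_k$ reached without a failure is stable in the original instance. This is the paper's Lemma~\ref{lemma:DC-DA:stable}, and it runs as follows.
\begin{enumerate}
\item Suppose no student outside $\rurals$ exhausts her list and nobody proposes to an institution in $\rural$. Then the sets at institutions in $\rural$ never change, and forbidden edges occur only at institutions outside $\rural$. Counting with Theorem~\ref{thm:rural} then shows that every institution outside $\rural$ stays full.
\item At a full institution, DA rejections only improve the worst student.
\item After a threshold forbid at $s$, the institution is refilled only with students strictly above $s$.
\item Hence a non-forbidden pair $(s,i)$ with $i\succ_s M_k(s)$ was rejected at a time when $i$ was full with better students, and it still is. A forbidden pair lies below $i$'s current worst student. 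So no pair blocks.
\end{enumerate}

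This is where the threshold shape of $F_j$ is really used. It is not needed, as you suggest, to make $M^*$ stable in the instance with $F_{j-1}$ deleted; that holds for any $F$. Moreover, when $M^*$ exists, both $M^*$ and $M_{j-1}$ are stable in the original instance, so Theorem~\ref{fact:stable_diff} applies there directly. The ``key difficulty'' you identified is therefore not the real one.
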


We prove Theorem~\ref{thm:DC-DA} through the following lemmas. First, Lemma~\ref{lemma:DC-DA:stable} establishes that any matching returned by the DC-DA algorithm is a \diverse\ stable matching. Then, Lemma~\ref{lemma:DC-DA:termination} proves the algorithm's completeness: if DC-DA outputs ``no solution,'' then no \diverse\ stable matching exists and also that otherwise the output is student-optimal. Finally, Lemma~\ref{lemma:DC-DA:running_time} shows that the procedure executes in linear time relative to the number of edges in the preference graph. The details of the proof are presented in Section~\ref{sec:proof:DC-DA}.

% \begin{remark}
%     To find an \egalopt\ stable matching, we just use the above algorithm, and if there is no \diverse\ stable matching, then we relax each bound $u_i^c$ and $u_i^{\ne c}$ by 1, until a \diverse\ stable matching is found. We are guaranteed to find one in at most $\max_{i\in I}q_i$ such iterations. Furthermore, by Corollary~\ref{cor:stuopt}, the output will be a student optimal among the $\ell_\infty$-most diverse stable matchings. 
% \end{remark}

First, we show how the DC-DA can be used to find an \egalopt\ stable matching.

\begin{theorem}\label{thm:DC-DA-most-diverse}
A student-optimal stable matching among the \egalopt\ ones can be identified in $\mathcal{O}(|E||S|)$ time.
\end{theorem}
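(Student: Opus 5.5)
The plan is to reduce the egalitarian problem to a sequence of feasibility questions, each answered by a run of DC-DA (Theorem~\ref{thm:DC-DA}).

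For an integer threshold $t\ge 0$, let the relaxed bounds be $\ell_i^c-t$ and $u_i^c+t$ for every $i\in I$ and $c\in C$. In the transformed form these are $u_i^c+t$ and $u_i^{\ne c}+t$. A stable matching $M$ satisfies $\divv_\infty(M)\le t$ exactly when it violates no relaxed bound by any amount. Since Lemma~\ref{lemma:DAFE:bounds} and Theorem~\ref{thm:DC-DA} hold for arbitrary integer bounds, DC-DA run on the relaxed instance does one of two things. Either it outputs the student-optimal stable matching among those with $\divv_\infty\le t$, or it correctly reports that none exists. Negative relaxed lower bounds are harmless, because the corresponding $u^{\ne c}$ bound just becomes vacuous.

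Let $t^*$ be the optimal value $\min_{M\in\stabset}\divv_\infty(M)$. By the Rural Hospitals Theorem (Theorem~\ref{thm:rural}), $|M(i)|$ is the same in every stable matching and is at most $q_i\le |S|$. Every violation is therefore at most $\max_i\max_c\max(\ell_i^c,|M(i)|)$, and after capping each $\ell_i^c$ at $|M(i)|$, which does not change the objective, we get $t^*\le |S|$. We run DC-DA for $t=0,1,2,\dots$ and stop at the first $t$ for which it succeeds. That $t$ equals $t^*$, and the returned matching is student-optimal among the stable matchings with $\divv_\infty\le t^*$. These are precisely the \egalopt\ stable matchings, since none can have $\divv_\infty<t^*$. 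This gives at most $|S|+1$ calls of cost $\mathcal{O}(|E|)$ each, for $\mathcal{O}(|E||S|)$ in total.

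The step I expect to be the main obstacle is the bound on the number of iterations and its interaction with the running time. One needs $t^*\le |S|$ without extra preprocessing cost, which the capping argument above handles. One also needs each DC-DA call to be linear even with relaxed bounds, including the initial DA, which can be computed once and reused. Binary search over $t$ would reduce the count to $\mathcal{O}(\log|S|)$ calls, but the linear scan already meets the stated bound, so I would use it for simplicity. Care is also needed that student-optimality is inherited directly from Theorem~\ref{thm:DC-DA} applied at $t=t^*$, and is not lost across the repeated calls.
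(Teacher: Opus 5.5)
Your route is the paper's: raise the relaxation parameter one unit at a time, run DC-DA with bounds $u_i^c+t$ and $u_i^{\ne c}+t$, and stop at the first success, which by Theorem~\ref{thm:DC-DA} is student-optimal among the stable matchings with $\divv_\infty\le t^*$, i.e.\ among the \egalopt\ ones. The one step that fails is your claim that capping each $\ell_i^c$ at $|M(i)|$ ``does not change the objective.'' Capping subtracts the same constant $\ell_i^c-|M(i)|$ from the $(i,c)$ lower-violation term in every stable matching. That is harmless for $\divv_1$, but for a maximum it can change which term is largest, and hence which matchings are optimal. For example, suppose in $M$ the $(i,c)$ term is $9$ and another term is $2$, while in $M'$ they are $10$ and $0$. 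Then $M$ is strictly better before capping. If capping subtracts $9$, the values become $\max(0,2)=2$ versus $\max(1,0)=1$, so $M'$ is strictly better. DC-DA on the capped instance can therefore return a matching that is not \egalopt\ for the original instance.

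The repair is easy and keeps your count. Leave the $\ell_i^c$ unchanged and start the scan at $t_0:=\max_{i,c}(\ell_i^c-|M(i)|)^+$.
\begin{itemize}
\item This is a lower bound on $t^*$, because $\numM_i^c(M)\le|M(i)|$.
\item We also have $t^*\le t_0+|S|$: every lower-violation term is at most $\ell_i^c\le t_0+|M(i)|$, and every upper-violation term is at most $|M(i)|\le|S|$.
\item Hence $t=t_0,\dots,t_0+|S|$ needs at most $|S|+1$ calls, still $\mathcal{O}(|E||S|)$ in total.
\end{itemize}
Starting at $t_0$ also guarantees $u_i^{\ne c}+t\ge 0$. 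This matters because DC-DA as written needs a non-$c$ student at $i$ whenever a $u^{\ne c}$ bound is violated, and under a negative bound no such student need exist. The paper avoids this issue only by tacitly assuming $\ell_i^c\le q_i$, when it says the bounds become non-binding once $\delta\ge q_i$.
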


% To identify an egalitarian or nearly-diverse stable matching, we apply the DC-DA algorithm within an iterative relaxation framework. If no stable matching satisfies the initial diversity bounds, we incrementally relax each constraint $u_i^c$ and $u_i^{\neq c}$ by one unit per iteration. This process continues until the DC-DA algorithm identifies a feasible stable matching. We are guaranteed to find such a matching in at most $\max_{i \in I} q_i$ iterations, as the constraints eventually become non-binding. Furthermore, this procedure ensures that the resulting output is the student-optimal stable matching among all those that are $\ell_\infty$-most diverse---effectively minimizing the maximum deviation from the original diversity targets.
\begin{proof}
To identify an \egalopt\ stable matching, we apply the DC-DA algorithm within an iterative relaxation framework. If no stable matching satisfies the initial diversity bounds, we incrementally relax each constraint $u_i^c$ and $u_i^{\neq c}$ by one unit per iteration.  Formally, the algorithm seeks the smallest $\delta^*$ such that there exists a stable matching $M$ satisfying:
\[
| \{s \in M(i) \mid c(s) = c\} | \le u_i^c + \delta^* \quad \text{and} \quad | \{s \in M(i) \mid c \notin c(s)\} | \le u_i^{\neq c} + \delta^*
\]
for all $i \in I$ and $c \in C$. This objective is equivalent to minimizing the maximum violation across all diversity constraints, which precisely defines the \egalopt\ stable matching. Since DC-DA (by Theorem~\ref{thm:DC-DA}) finds the student-optimal matching for any fixed set of bounds, and the set of stable matchings satisfies the lattice property, the first $\delta$ for which DC-DA returns a solution yields the student-optimal matching among all matchings with the same minimum worst diversity violation.

The procedure iterates on $\delta$. In each iteration, we run the DC-DA algorithm which takes $\mathcal{O}(|E|)$ time. The maximum number of iterations is bounded by the maximum capacity of any institution, $\max_{i \in I} q_i$, as the constraints become non-binding when $\delta \ge q_i$. Therefore, the total running time is $\mathcal{O}(|E| \cdot \max_{i \in I} q_i)\le \mathcal{O}(|E||S|)$, which is polynomial in the size of the input.
\end{proof}

%\begin{lemma}
%    The maximally-diverse stable matchings form a lattice.
%\end{lemma}
%\begin{proof}
%    By Theorem~\ref{fact:stable_diff}, for any two stable matchings $M,M'$, we have that if an institution chooses the best $|M(i)|$ or the worst $|M(i)|$ students from the set $M(i)\cup M'(i)$, then it is either $M(i)$ or $M'(i)$. Hence, if $M(i)$ and $M'(i)$ are both \diverse\, then so must be their meet $M\vee M'$ (where all institutions $i$ choose the best $|M(i)|$ students simultaneously) and join $M\wedge M'$ (where all institutions $i$ choose the worst $|M(i)|$ students simultaneously). Hence, they form a lattice. 
%\end{proof}
%\begin{corollary}
%    The $\ell_\infty$-most diverse stable matchings form a lattice.
%\end{corollary}

\subsection{Proof of Theorem~\ref{thm:DC-DA}}
\label{sec:proof:DC-DA}

We first show that whenever the DC-DA algorithm terminates and returns a matching, that matching is a \diverse\ stable one.

\begin{lemma}\label{lemma:DC-DA:stable}
Suppose DC-DA terminates at iteration $T$ with a matching $M_T$. For every $k \in \{0, \dots, T\}$, the provisional matching $M_k$ is stable with respect to the original preference profiles $\succ_S$ and $\succ_I$ and $M_T$ is a \diverse\ stable matching.
\end{lemma}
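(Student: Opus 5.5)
We argue by induction on $k$ that each provisional matching $M_k$ is stable with respect to the original profiles $\succ_S$ and $\succ_I$. Afterwards we read off diversity of the terminal matching from the termination condition together with Lemma~\ref{lemma:DAFE:bounds}.

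\textbf{Base case.} For $k=0$, $M_0$ is the output of ordinary student-proposing DA, so it is stable.

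\textbf{Inductive step.} Suppose $M_{k-1}$ is stable. The algorithm reached iteration $k$ without failing, so $M_k$ is the output of DA resumed with \texttt{BreakMarriage} operations on the edges of $F_k$. Note that $F_k$ is closed downward at each institution: whenever $(s,i)\in F_k$ and $s\succ_i s'$, also $(s',i)\in F_k$.

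First, $M_k$ is a matching. Students hold at most one institution, and institutions reject down to capacity exactly as in DA.

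Now take any edge $(s,i)\notin M_k$ with $i\succ_s M_k(s)$. Since students only move down their lists, $s$ proposed to $i$ at some point or skipped it. There are two cases.

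\emph{Case 1: $(s,i)\notin F_k$.} Then $s$ proposed to $i$ and was rejected, either in the original DA or in a resumed phase. At the moment of rejection, $i$ was full with students it prefers to $s$. Between \texttt{BreakMarriage} operations, the holdings of $i$ only improve. So the standard DA argument shows $(s,i)$ does not block.

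\emph{Case 2: $(s,i)\in F_k$.} This is the main obstacle: forbidding an edge by fiat could in principle create a blocking pair. The edge entered $F_k$ at some iteration $j\le k$ through a pivot student $s_0$ with $s_0\succeq_i s$. At that moment $s_0\in M_{j-1}(i)$.

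We use the \texttt{BreakMarriage} property. After the pivot is removed, every student ranked at or below $s_0$ is barred from $i$. Hence every student $i$ accepts afterwards is ranked above $s_0$, and $i$'s worst student weakly improves over time. So every student in $M_k(i)$ is strictly better than $s_0$, and therefore better than $s$.

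It remains to rule out that $i$ is unfilled in $M_k$. This is exactly what the failure conditions guarantee. No student was displaced from the matched set $S\setminus\rurals$ without reaching a valid assignment, and no student proposed to an institution in $\rural$. Therefore the set of matched students, and hence every $|M_k(i)|$, is the same as in $M_0$.

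If $i\notin\rural$, then $i$ is full in $M_0$ and so also in $M_k$. If $i\in\rural$, then no edge at $i$ is ever forbidden: $M_0(i)$ already satisfies the bounds, or we stopped at initialization, and no proposals ever arrive there, so its set never changes. Thus in Case 2, $i$ is full with students it prefers to $s$, and $(s,i)$ does not block.

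\textbf{Conclusion.} Every $M_k$ is stable. The algorithm returns $M_T$ only after the Check Bounds step confirms that all $u_i^c$ and $u_i^{\ne c}$ are satisfied. By Lemma~\ref{lemma:DAFE:bounds}, $M_T$ is therefore \diverse.
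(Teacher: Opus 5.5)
Your decomposition matches the paper's: a case split on whether $(s,i)\in F_k$, the observation that forbidden edges arise only at institutions outside $\rural$, constancy of $|M_k(i)|$ from the failure conditions, and Lemma~\ref{lemma:DAFE:bounds} at the end. Your Case 2 is correct. The gap is in Case 1. There you only argue that $i$'s holdings improve \emph{between} \texttt{BreakMarriage} operations, then appeal to the standard DA argument. Nothing in Case 1 rules out a \texttt{BreakMarriage} at $i$ occurring \emph{after} $s$ was rejected. Such an operation leaves $i$ temporarily unfilled, and an unfilled $i$ accepts any non-forbidden proposer, possibly one it ranks below $s$. Case 1 also never shows that $i$ is full in $M_k$. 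So the standard DA argument does not apply as stated.

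The repair uses only facts you already have. Let $t$ be the moment $i$ rejected $s$; at $t$, $i$ is full with students it prefers to $s$. Suppose some \texttt{BreakMarriage} at $i$ occurs after $t$, and take the first one, triggered at iteration $j$ by a pivot $s_0\in M_{j-1}(i)$. Up to that point $i$ stayed full and only traded its worst student for better ones, so $s_0\succ_i s$. By downward closure, $(s,i)\in F_j\subseteq F_k$, contradicting the Case 1 hypothesis. Hence no \texttt{BreakMarriage} touches $i$ after $t$, $i$ remains full with students preferred to $s$ through $M_k$, and $(s,i)$ does not block.

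The paper handles this point with Claim~\ref{claim:inst-impr}, which it proves before the case split. The claim works at the level of iterations: $|M_k(i)|=|M_{k-1}(i)|$, and the worst student of $M_k(i)$ is weakly better than that of $M_{k-1}(i)$. The proof uses that each pivot lies in $M_{k-1}(i)$ and everything accepted afterwards is strictly better than it. This monotonicity then serves both cases. Your remark in Case 2 that ``$i$'s worst student weakly improves over time'' could play the same role. To do so it must be stated iteration-wise, since within an iteration the worst student can drop after a \texttt{BreakMarriage}. It must also be established before the case split, together with your size-invariance paragraph, which Case 1 needs as well.
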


 \begin{proof}
Let $M_k$ denote the provisional matching obtained after the $k$-th iteration of the algorithm, with $M_0$ being the initial student-optimal stable matching. We first claim that institutions weakly improve between iterations in the DC-DA algorithm.

\begin{claim}\label{claim:inst-impr}
Suppose DC-DA reaches a matching $M_k$ for $k\ge 1$. 
    For any institution $i \in I$, we have $|M_k(i)| = |M_{k-1}(i)|$. Furthermore, for any institution $i$, it prefers its least-preferred student in $M_k$ to its least-preferred student in $M_{k-1}$ according to $\succ_i$.
\end{claim}

\begin{proof}
Since the algorithm does not halt at iteration $k$, by construction, no student who was matched in $M_0$ (i.e., $s \notin \rurals$) exhausts their preference list, and no student proposes to an institution in $\rural$. This implies that all students who were matched in $M_{k-1}$ remain matched in $M_k$ and others have already exhausted their lists by $M_0$. 

Consequently, any institution $i$ that was at full capacity in $M_{k-1}$ (the ones not in $\rural$) must remain at full capacity in $M_k$. As long as $i$ remains at full capacity, by the fundamental rejection rule of the DA, its least preferred student weakly improves. When a \texttt{BreakMarriage} operation is performed on a forbidden edge $(s,i)$ of $i$, then no student weakly worse than $s$ ever proposes to $i$ (the edge gets forbidden), so $i$ must reach $M_k$ filled with students strictly better than $s$, which is weakly preferred to its least-preferred student in $M_{k-1}$. Thus, the rank of its worst-matched student must weakly improve.

Furthermore, for any institution $i \in \rural$, the assumption that the algorithm does not halt implies $i$ receives no new proposals during this iteration. Therefore, its set of assigned students remains unchanged, $M_k(i) = M_{k-1}(i)$. For institutions not in $\rural$, they remain at capacity by the logic above. Thus, $|M_k(i)| = |M_{k-1}(i)|$ for all $i \in I$, and the preference bound holds.
\end{proof}

 We claim that $M_k$ is stable.
Suppose, for the sake of contradiction, that there exists a blocking pair $(s, i)$ for the matching $M_k$ with respect to the original preferences.

First, consider the case where $(s, i) \notin F_k$. If $(s, i)$ is not a forbidden edge, then $s$ must have proposed to $i$ and been rejected at some point during the student-proposing DA process (so $i\notin \rural$ is filled as the algorithm reached $M_k$). Under the standard DA mechanism, an institution only rejects a student if it is already matched with students it prefers more. By Claim~\ref{claim:inst-impr}, the rank of the least-preferred student assigned to $i$ weakly improves across iterations. Consequently, in the final matching $M_k$, institution $i$ still prefers every student in $M_k(i)$ to $s$. Thus, $(s, i)$ cannot form a blocking pair, contradiction.

Next, consider the case where $(s, i) \in F_k$. By the algorithm's construction, an edge is only added to the forbidden set if the institution $i$ is not in $\rural$. Since $i \notin \rural$, it was at full capacity when $(s, i)$ was forbidden and $i$ kept only students it strictly preferred to $s$ after the \texttt{BreakMarriage} operations, and weakly worse ones became forbidden. Since the rank of the least-preferred student at $i$ weakly improves in each subsequent iteration (Claim~\ref{claim:inst-impr}), $i$ remains matched with a full set of students in $M_k$ that it prefers to $s$. Therefore, $i$ would not deviate to match with $s$, and $(s, i)$ cannot be a blocking pair, contradiction again.

Since the algorithm terminated with $M_T$, it must satisfy all diversity upper bounds $u_i^c$ and $u_i^{\ne c}$, hence by Lemma~\ref{lemma:DAFE:bounds} it is \diverse.
\end{proof}

%\begin{lemma}\label{lemma:DC-DA:maximally_diverse}
%If the DC-DA algorithm terminates and outputs a matching $M^*$, then $M^*$ is a \diverse\ stable matching.
%\end{lemma}

We next establish the completeness of the algorithm by demonstrating that an unsuccessful termination serves as a formal certificate of non-existence. The proof relies on the fact that any edge $(s, i)$ that is either added to the forbidden set $F = \bigcup_j F_j$ or rejected during the Deferred Acceptance procedure cannot belong to any \diverse\ stable matching. 

\begin{lemma}\label{lemma:DC-DA:termination}
If the DC-DA algorithm terminates and outputs ``no solution,'' then there exists no stable and \diverse\ matching for the given instance. Otherwise, the output $M_T$ is student-optimal among \diverse\ stable matchings.
\end{lemma}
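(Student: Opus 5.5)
We argue by an invariant: \emph{no edge that is ever forbidden or rejected during DC-DA belongs to any \diverse\ stable matching}. From it both parts of the lemma follow. For the second part, suppose the algorithm returns $M_T$ and let $M^*$ be any \diverse\ stable matching. Each student $s$ is matched in $M_T$ to the best institution she has not had rejected or forbidden. By the invariant, $M^*(s)$ is still available to her, so $M_T(s)\succeq_s M^*(s)$. For the first part, suppose a student $s\notin\rurals$ exhausts her list. Then every edge of $s$ is excluded from all \diverse\ stable matchings, yet $s$ is matched in every stable matching by the Rural Hospitals Theorem (Theorem~\ref{thm:rural}), a contradiction. Similarly, suppose a student proposes to some $i\in\rural$. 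Then every edge she preferred to $i$ is excluded, so in a \diverse\ stable matching $M^*$ she is matched to $i$ or to something worse. Matching her to $i$ changes the set $M^*(i)=M_0(i)$, which is fixed, so that is impossible. Matching her to something worse means $(s,i)$ blocks $M^*$, since $i$ is undersubscribed. Finally, the initial check at $\rural$ is justified because $M(i)$ is invariant for $i\in\rural$.

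The invariant is proved by induction over the sequence of events: DA rejections and forbidding steps. Assume every edge excluded so far lies in no \diverse\ stable matching $M^*$.

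\emph{Rejection step.} Institution $i$ rejects $s$ while holding $q_i$ students it prefers to $s$. Suppose $(s,i)\in M^*$. Then some student $s'$ currently held by $i$ is not in $M^*(i)$, since $|M^*(i)|\le q_i$ and $s$ occupies a seat there. Now $s'$ proposes only down her list through non-excluded edges, so by induction $M^*(s')$ is weakly worse for $s'$ than $i$, and in fact strictly worse because $(s',i)\notin M^*$. Since $i$ prefers $s'$ to $s\in M^*(i)$, the pair $(s',i)$ blocks $M^*$, a contradiction. This is the standard Gale--Shapley/Gusfield argument, in the spirit of Theorem~\ref{thm:forbiddenopt}.

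\emph{Forbidding step.} This is the main obstacle. Say the bound $u_i^c$ is violated at the current stable matching $M_{j-1}$, and $s$ is the worst category-$c$ student at $i$. We must show that no \diverse\ stable $M^*$ contains an edge $(s',i)$ with $s\succeq_i s'$. Suppose it does. We use Theorem~\ref{fact:stable_diff} for $M^*$ and $M_{j-1}$ at $i$, together with $|M^*(i)|=|M_{j-1}(i)|$. It gives that $M^*(i)\setminus M_{j-1}(i)$ is either the best or the worst part of the union. By the inductive hypothesis and the student-optimality of $M_{j-1}$ over the non-excluded edges (each student in $M^*$ is weakly worse off than in $M_{j-1}$), $M^*$ is weakly better for institutions than $M_{j-1}$. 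So the changed students of $M^*(i)$ are the best ones: $M^*(i)$ consists of some students at least as good as $\min M_{j-1}(i)$ added on top, while dropping a set of worst students of $M_{j-1}(i)$. Concretely, $M^*(i)\cap M_{j-1}(i)$ is a top segment of $M_{j-1}(i)$.

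Now $M^*(i)$ contains some $s'$ with $s\succeq_i s'$. If $s'\in M_{j-1}(i)$, the top-segment structure forces $M^*(i)$ to contain every student of $M_{j-1}(i)$ above $s'$, in particular all category-$c$ students of $M_{j-1}(i)$, of which there are more than $u_i^c$. If instead $s'\notin M_{j-1}(i)$, then $s'$ is among the added best students, so all of them outrank the whole of $M_{j-1}(i)$, and $s'$ ranks no higher than $s$, which lies in $M_{j-1}(i)$. That is a contradiction, so this case cannot occur. Either way $M^*$ violates $u_i^c$, contradicting Lemma~\ref{lemma:DAFE:bounds}. The $u_i^{\ne c}$ case is identical. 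The delicate point is making precise that $M^*$ is institution-weakly-better than $M_{j-1}$, so that the ``best'' alternative of Theorem~\ref{fact:stable_diff} applies. I would derive this from the induction together with the usual lattice duality: students weakly worse off implies institutions weakly better off among stable matchings.
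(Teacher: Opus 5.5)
\paragraph{Overall assessment.} Your architecture is sound, and it differs usefully from the paper's. You maintain one event-level invariant that covers DA rejections as well as forbidding steps, and prove the rejection part by the usual Gale--Shapley blocking argument. From that invariant you read off three things directly: students weakly prefer each $M_{j-1}$ to any \diverse\ stable matching, the failure cases, and final student-optimality. The paper instead inducts over iterations and gets these from Theorem~\ref{thm:forbiddenopt}, together with the claim that rejections made under $F_{k-1}$ remain valid under $F_k$.

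\paragraph{The step that fails.} The one step that fails as written is the second case of your forbidding step, $s'\notin M_{j-1}(i)$. There you use the ``best'' alternative of Theorem~\ref{fact:stable_diff} in the form ``the added students $M^*(i)\setminus M_{j-1}(i)$ outrank all of $M_{j-1}(i)$'', and conclude that the case is impossible. That strengthening is false. Take students $x,y,z$ and institutions $i$ (capacity $2$) and $h$ (capacity $1$), with these preferences:
\begin{itemize}
\item $x$ accepts only $i$, and $h\succ_y i$, $i\succ_z h$;
\item $x\succ_i y\succ_i z$ and $z\succ_h y$.
\end{itemize}
Both $M=\{(x,i),(z,i),(y,h)\}$ (the student-optimal one) and $M'=\{(x,i),(y,i),(z,h)\}$ are stable, and students weakly prefer $M$. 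Yet the added student $y$ lies strictly between $x$ and $z$ in $i$'s order. Suppose $x$ is the only category-$c$ student and $u_i^c=0$. Then with $M_{j-1}=M=M_0$ you get $s=x$ and $s'=y\notin M(i)$ with $s\succ_i s'$, which is exactly your ``impossible'' configuration. Stability only gives a one-sided fact: every student of $M^*(i)$ outranks every dropped student in $M_{j-1}(i)\setminus M^*(i)$. Your first case, the top-segment claim, needs only this and survives.

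\paragraph{The repair.} The fix is precisely that one-sided fact, which is the ``delicate point'' you flag and which follows from a one-line blocking argument. It also makes both Theorem~\ref{fact:stable_diff} and the case split unnecessary.
\begin{itemize}
\item $M_{j-1}(i)$ has more than $u_i^c$ category-$c$ students and $M^*(i)$ has at most $u_i^c$, so some category-$c$ student $t\in M_{j-1}(i)$ is not in $M^*(i)$.
\item Your invariant gives $M_{j-1}(t)\succeq_t M^*(t)$, hence $i\succ_t M^*(t)$.
\item We have $t\succeq_i s\succeq_i s'$ with $t\neq s'$ and $s'\in M^*(i)$, so $(t,i)$ blocks $M^*$.
\end{itemize}
This is exactly the paper's argument.

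\paragraph{A minor point.} In the $i\in\rural$ case, ``placing $s$ at $i$ changes $M_0(i)$'' needs $s\notin M_0(i)$. This holds because no student proposes twice to the same institution.
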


% \vspace{0.5em}
% \noindent\textbf{3. Termination and "No Solution" Conditions.} \\
% By the Rural Hospitals Theorem (Theorem \ref{thm:rural}), the set of matched students, the set of matched institutions, and the number of positions filled at each institution are identical across \emph{all} stable matchings. 

% If the algorithm halts because an institution $i \in \rural$ violates a diversity bound in $M_0$, then by the Rural Hospitals Theorem, $i$ receives the exact same subset of students in every stable matching. Thus, the bound is violated in all stable matchings, and no MDSM exists.

% If the algorithm halts because a student $s \notin \rurals$ exhausts their preference list, or applies to an institution in $\rural$, this implies $s$ was forced to propose to an institution worse than their assignment in the student-pessimal stable matching, or a rural hospital received a new proposal. Suppose the output is incorrect and there exists an MDSM. Then, the first option contradicts Theorem~\ref{thm:rural}. The second option would mean that there is a stable matching, where a student is worse off than in the student-pessimal stable matching, also a contradiction.

\begin{proof}

Let $\mathcal{M}^{MD}$ denote the set of \diverse\ stable matchings.
To prove this lemma, we show by induction that for any $k\le T$ (where $T$ is the number of iterations), no \diverse\ stable matching  $M\in \mathcal{M}^{MD}$ intersects $F_k$. For $k=0$, $F_0=\emptyset$, so this is trivial.

Let $\mathcal{M}(F_k)$ denote the set of stable matchings in the original instance that do not contain any edge in $F_k$. %By Theorem~\ref{thm:forbiddenopt}, if $\mathcal{M}(F_k)\ne \emptyset$, then a matching $M_k$ obtained after iteration $k$ and it is the student-optimal stable matching within $\mathcal{M}(F_k)$.

\textbf{Inductive step:} Assume that no matching $M \in \mathcal{M}^{MD}$ contains any edge in $F_{k-1}$. Let $M^*$ be an arbitrary matching in $\mathcal{M}^{MD}$. By the inductive hypothesis, $M^* \in \mathcal{M}(F_{k-1})$. Since $M_{k-1}$ is student-optimal in $\mathcal{M}(F_{k-1})$, it follows that $M_{k-1}(s) \succeq_s M^*(s)$ for all students $s$.

Suppose that at iteration $k$, the algorithm adds the edges $\{(s', i) \mid s \succeq_i s'\}$ to $F_k$ because a bound $u_i^c$ is violated at institution $i$ in $M_{k-1}$. Here, $s$ is the least-preferred category $c$ student in $M_{k-1}(i)$. Assume, for contradiction, that $M^*$ contains some edge $(s', i)$ where $s \succeq_i s'$. Because $M^*$ satisfies $u_i^c$ while $M_{k-1}$ violates it, and because both matchings must assign the same total number of students to $i$ (by the Rural Hospitals Theorem), $M_{k-1}(i)$ must contain at least one category $c$ student $s''$ that is not present in $M^*(i)$.

Since $s'' \in M_{k-1}(i)$ and $s'' \notin M^*(i)$, the student-optimality of $M_{k-1}$ in $\mathcal{M}(F_{k-1})$ implies $i = M_{k-1}(s'') \succ_{s''} M^*(s'')$. Furthermore, since $s$ is the least-preferred category $c$ student in $M_{k-1}(i)$, we have $s'' \succeq_i s$. Combined with our assumption that $s \succeq_i s'$, we get $s'' \succeq_i s'$. As $s'' \neq s'$ ($s'\in M^*(i)$, but $s''\notin M^*(i)$), we get that $i$ strictly prefers $s''$ to $s'$. Thus, $(s'', i)$ forms a blocking pair for $M^*$, contradicting its stability. A symmetric argument holds for violations of $u_i^{\ne c}$. 

Therefore, no \diverse\ stable matching contains any edge in $F_k$. %Since the DC-DA algorithm---via the Deferred Acceptance procedure---systematically finds the student-optimal matching in $\mathcal{M}(F_k)$ (if one exists), it follows that if the algorithm reaches a matching $M_k$, then $M_k$ is the student-optimal matching in $\mathcal{M}(F_k)$ .

By the Rural Hospitals Theorem (Theorem~\ref{thm:rural}), the set of matched students, the set of unfilled institutions, and the number of students assigned to each institution are invariant across all stable matchings.

First, suppose the algorithm terminates because an institution $i \in \rural$ violates an upper bound in $M_0$. By Theorem~\ref{thm:rural}, institution $i$ must be matched with the same set of students in every stable matching. Since the bound is violated in $M_0$, it is necessarily violated in all stable matchings, and thus no \diverse\ stable matching exists. 

Next, suppose the algorithm terminates because a student $s \notin \rurals$ exhausts their preference list or proposes to an institution $i \in \rural$ (which is thus not $M_0(s)$). Assume, for contradiction, that a \diverse\ stable matching $M^*$ exists. We know that $M^*$ cannot contain any edges from the forbidden set $F_T$. Consequently, $M^*\in \mathcal{M}(F_T)$, so $\mathcal{M}(F_T)\ne \emptyset$ and thus by Theorem~\ref{thm:forbiddenopt} the DA  with \texttt{BreakMarriage} for the forbidden edge set $F_T$  finds a stable matching $M_T$. Thus, this DA cannot have $s\notin S$ exhausting their preference list or proposing to some $i\in \rural$ if $i\ne M_0(s)$ by the Rural Hospital Theorem. Since any rejection in the DC-DA up to iteration $T$ must happen in this DA with forbidden edge set $F_T$ too, we get a contradiction.

Finally, if the algorithm terminates with a matching $M_T$, then $M_T\in \stabset (F_T)$, and we have showen that any \diverse\ stable matching is also in $\stabset (F_T)$ and by Theorem~\ref{thm:forbiddenopt}, $M_T$ is student-optimal in $\stabset (F_T)$ as desired (using again that any rejection in the DC-DA up to iteration $T$ must happen in the DA with \texttt{BreakMarriage} for the forbidden edge set $F_T$ too).
\end{proof}

% \vspace{0.5em}
% \noindent\textbf{4. Running Time.} \\
% The algorithm maintains the state of the Deferred Acceptance procedure. When rejections are forced, the algorithm resumes precisely from the current state. Consequently, each student proposes to each institution on their preference list at most once. Evaluating bounds and forcing rejections takes time proportional to the degree of the institution. Thus, the total running time is bounded by the number of preferences, which is $\mathcal{O}(|E|)$.
We next analyze the computational complexity of the DC-DA algorithm, demonstrating that it achieves optimal linear time complexity relative to the size of the preference graph.

\begin{lemma}\label{lemma:DC-DA:running_time}
The DC-DA algorithm terminates in $\mathcal{O}(|E|)$ time.
\end{lemma}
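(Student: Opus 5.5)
The running time bound comes from an amortized count: charge every unit of work to an edge of the preference graph, and show each edge is charged only a constant number of times over the whole run of DC-DA. The key observation is that DC-DA never restarts the Deferred Acceptance procedure. After each update of the forbidden set it resumes from the current partial matching, as justified earlier via Theorem~\ref{thm:forbiddenopt}: every rejection made under $F_{j-1}$ remains valid under $F_j$. Each student therefore moves monotonically down her preference list across all iterations. Every edge $(s,i)$ is then proposed along at most once, skipped as forbidden at most once, and rejected or broken by \texttt{BreakMarriage} at most once. The total number of proposals, rejections and skips is $\mathcal{O}(|E|)$.

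The steps, in order, are as follows.

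\textbf{(i) Initialization.} The initial DA costs $\mathcal{O}(|E|)$. Computing $\rural$ and $\rurals$ costs $\mathcal{O}(|S|+|I|)$.

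\textbf{(ii) Forbidden sets.} Each $F_j$ is a suffix of $i$'s priority list. I will store it as a single threshold pointer per institution, which only moves upward. A proposal to $i$ is then tested for being forbidden in $\mathcal{O}(1)$ time. Advancing these pointers costs $\mathcal{O}(|E(i)|)$ in total per institution, and the \texttt{BreakMarriage} operations on $F_j\cap M_{j-1}$ touch each edge at most once.

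\textbf{(iii) Rejections.} Each institution keeps its tentatively accepted students in a structure ordered by $\succ_i$; a bucket array indexed by rank works. Finding the worst student is amortized $\mathcal{O}(1)$ via a monotone pointer, which is valid because by Claim~\ref{claim:inst-impr} the worst student only improves.

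\textbf{(iv) Failure conditions.} The two halting tests, a student $s\notin\rurals$ exhausting her list or a proposal to some $i\in\rural$, are $\mathcal{O}(1)$ checks per event.

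The main obstacle is the \emph{Check Bounds} step. Naively, each iteration rescans every $i$ and every $c\in C$, which costs $\mathcal{O}(|E||C|)$ per iteration, and the number of iterations can be large. I would handle this by maintaining the counters $\numM_i^c$ and $|M(i)|-\numM_i^c$ incrementally. Each acceptance or removal of student $s$ at $i$ updates the counters of $c\in c(s)$ in $\mathcal{O}(|c(s)|)$ time. A violated constraint is pushed onto a worklist only when its counter crosses its bound, so a check costs $\mathcal{O}(1)$ amortized per change.

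To find the least-preferred violating student of category $c$ (or of non-category $c$), I use a per-(institution, category) pointer scanning upward from the bottom of $i$'s list. By Claim~\ref{claim:inst-impr} and the forbidding rule, this pointer also only moves upward. Its total movement is therefore $\mathcal{O}(|E(i)|)$ per category.

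Summing gives $\mathcal{O}(|E|)$ when the category memberships are treated as a constant factor, which is the sense in which the statement should be read; otherwise a factor of $|C|$ appears. Every iteration forbids at least one currently matched edge, so there are at most $|E|$ iterations, and all work is absorbed by the amortized edge and pointer charges above.
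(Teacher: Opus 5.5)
Your proposal is correct and follows essentially the same amortized argument as the paper: DA is resumed rather than restarted, so each edge is proposed, forbidden, and broken at most once, and per-(institution, category) counters plus pointer-based searches keep the bound checks and forbidden-set updates at $\mathcal{O}(|E|)$ in total. You state explicitly that this bound treats the per-event category bookkeeping as constant, and that otherwise a factor of $|C|$ appears; the paper's proof makes the same assumption, only implicitly, when it asserts constant-time bound checking per proposal.
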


\begin{proof}
The algorithm's efficiency stems from the fact that it maintains the state of the Deferred Acceptance (DA) process across iterations. By resuming proposals from the last rejected student rather than restarting from scratch, the algorithm ensures that each student traverses their preference list at most once. 

Each proposal and subsequent acceptance or rejection by an institution takes constant time, assuming standard data structures. When a diversity bound is violated, identifying the least-preferred student $s$ (within or without category $c$) and updating the forbidden set $F$ can be done efficiently by maintaining two additional sorted lists of current matches for each institution $i$ and each category $c$. Since each edge $(s, i)$ is added to $F$ at most once and triggers at most one \texttt{BreakMarriage} operation, the total work for updating $F$ is $\mathcal{O}(|E|)$. Finally, by maintaining counters for each category $c$ at each institution, bound checking also takes constant time per proposal. Summing these components, the total running time is linear in the number of edges.
\end{proof}

%\begin{lemma}\label{lemma:DC-DA:student-optimal}
%The matching $M^*$ produced by the DC-DA algorithm is the student-optimal stable matching within the set of all stable and \diverse\ matchings.
%\end{lemma}

%\begin{proof}
  %  In Lemma~\ref{lemma:DC-DA:termination} we have already shown that if $F$ is the set of forbidden edges at the end of the algorithm, then no \diverse\ stable matching contains edges from $F$. Hence, by Theorem~\ref{thm:forbiddenopt}, the output is student-optimal among \diverse\ stable matchings. 
%\end{proof}

% \begin{corollary}\label{cor:stuopt}
%     If the algorithm outputs a matching $M$ then it is a student-optimal stable matching among the \diverse\ stable matchings. 
% \end{corollary}
% \begin{proof}
%     In Theorem~\ref{thm:DC-DA} we have already shown that if $F$ is the set of forbidden edges at the end of the algorithm, then no \diverse\ stable matching contains edges from $F$. Hence, by Claim~\ref{lemma:stuopt}, the output is student-optimal among \diverse\ stable matchings. 
% \end{proof}

\section{The General Algorithm}\label{sec:main}

% In this section, we solve the general matching problem defined in Section~\ref{sec:Preliminaries} with arbitrary institutional objectives. 
% In this framework, institutional goals extend beyond diversity considerations and are modeled as a distributional function defined over the set of matched students.
% Given set functions $g_i:2^{E(i)}\to \mathbb{R}$ for all institutions $i\in I$, find a stable matching $M$ that minimizes $g(M):=\sum_{i\in I}g_i(M(i))$.

In this section, we address the general matching problem by incorporating arbitrary institutional objectives where an institution's goals extend beyond simple diversity requirements; instead, they are characterized by a distributional objective function $g_i$ defined over its set of matched students. Formally, given a set of distributional objective functions $g_i: 2^{E(i)} \to \mathbb{R}$ for each institution $i \in I$, our goal is to identify a stable matching $M$ that minimizes the aggregate institutional cost:
\[
g(M) := \sum_{i \in I} g_i(M(i)).
\]

% Before diving into the mechanics, we briefly outline the main intuition behind the algorithm. While the total number of stable matchings in an instance can be exponential, the number of distinct student sets assigned to any single institution $i$ across \emph{all} stable matchings is surprisingly small (bounded linearly by the number of acceptable students). Because of this property, we can perfectly map the complex, non-linear set function $g_i$ to a simple linear edge-weight function $\w$ that behaves identically on the restricted domain of stable matchings. The problem then elegantly reduces to finding a minimum-weight stable matching.

Although the total number of stable matchings in a given instance can be exponential, the number of distinct student sets that any single institution $i$ can be assigned across all stable matchings is remarkably small. Specifically, it is bounded linearly by the number of students acceptable to $i$. 

Exploiting this structural sparsity, we can linearize the complex, non-linear set function $g_i$. We construct a simple edge-weight function $\w$ that, while simple in form, behaves identically to $g_i$ over the restricted domain of stable matchings, thus rendering the problem efficiently solvable. This result is particularly significant, as it reveals that global institutional welfare can be maximized---or aggregate penalties minimized---without compromising the fundamental requirement of stability.

\subsection{Structural Preliminaries}

% \zh{$F$ was used for a different meaning in the previous algorithm. Change $F$ to $H$.}

% Given a set of edges $H$, let $\top_i(H)$ denote the most preferred student of institution $i$ according to $\succ_i$ in the assignment $H(i)=H\cap E(i)$, and $\worst_i(H)$ the least preferred student. Because an institution might see the exact same subset of students across several different stable matchings, we group these into distinct sets.

Given a set of edges $F$ of $E(i)$, let $\top_i(F)$ denote the most-preferred student assigned to institution $i$ under $\succ_i$, and let $\worst_i(F)$ denote the least-preferred (cutoff) student. While an institution may be assigned the same set of students across various stable matchings, we are interested in the distinct collections of students that can be stably matched to $i$. 

% \zhnew{
% Stable sets do not form a chain by inclusion. Instead, they form a chain because their worst (cutoff) students are totally ordered, implying that transitions between stable matchings only move an institution’s admission cutoff downward}

 Stable sets do not form a chain by inclusion. Instead, they form a chain ordered by preference: their extreme elements (the top and worst students) are totally ordered. This implies that as we transition between stable matchings, an institution’s admission threshold moves monotonically.

By Theorem~\ref{fact:stable_diff}, distinct, nonempty stable sets must have distinct top students and cutoff students (i.e., if $\emptyset\ne E_i^x \neq E_i^y$, then $\top_i(E_i^x) \neq \top_i(E_i^y)$ and $\worst(E_i^x)\ne \worst(E_i^y)$). Importantly, this also implies $k_i\le |E_i|$, whenever stable sets are nonempty. 
\begin{corollary}\label{cor:stabsets}
We can index the stable sets $E_i^1,\dots, E_i^{k_1}$ of $i$ such that their top students are strictly decreasing in preference: $\top_i(E_i^1) \succ_i \dots \succ_i \top_i(E_i^{k_i})$. Consequently, the corresponding \emph{cut-off students} also follow a strict preference ordering: $\worst_i(E_i^1) \succ_i \dots \succ_i \worst_i(E_i^{k_i})$. Most importantly, when stable sets are nonempty, then $k_i\le |E(i)|$.
\end{corollary}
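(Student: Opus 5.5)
The plan is to derive everything from Theorem~\ref{fact:stable_diff} applied to pairs of stable matchings realizing two different stable sets, together with the Rural Hospitals Theorem (Theorem~\ref{thm:rural}).

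\textbf{Comparing two stable sets.} Fix $i$ and two distinct nonempty stable sets $E_i^x=M(i)$ and $E_i^y=M'(i)$ with $M,M'\in\stabset$. By Theorem~\ref{thm:rural}, $|E_i^x|=|E_i^y|=:q$. Let $U=E_i^x\cup E_i^y$. Theorem~\ref{fact:stable_diff} says $E_i^x\setminus E_i^y$ consists of either the best or the worst $|E_i^x\setminus E_i^y|$ students of $U$.

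Suppose it is the best ones. Since the sets have equal size, $E_i^y\setminus E_i^x$ has the same cardinality and is disjoint from $E_i^x\setminus E_i^y$. Applying the theorem symmetrically, $E_i^y\setminus E_i^x$ must then be the worst ones of $U$. It cannot also be the best, because both differences are nonempty, disjoint and of equal size. Hence every student of $E_i^x\setminus E_i^y$ is preferred by $i$ to every student of $E_i^y\setminus E_i^x$, while the common part $E_i^x\cap E_i^y$ lies in between.

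From this we get $\top_i(E_i^x)\succ_i\top_i(E_i^y)$: the top of $U$ lies in $E_i^x\setminus E_i^y$, so it is not in $E_i^y$. We also get $\worst_i(E_i^x)\succ_i\worst_i(E_i^y)$: the worst of $U$ lies in $E_i^y\setminus E_i^x$. In the other case both inequalities reverse. Thus for any two distinct nonempty stable sets, the tops and the cutoffs are both strictly comparable, and in the same direction.

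\textbf{Indexing and the bound.} The relation ``$E_i^x$ above $E_i^y$ iff $\top_i(E_i^x)\succ_i\top_i(E_i^y)$'' is induced by the strict order $\succ_i$ on distinct top students, so it is a strict total order on the stable sets. Indexing by it gives $\top_i(E_i^1)\succ_i\dots\succ_i\top_i(E_i^{k_i})$. By the pairwise same-direction property above, the cutoff students are then also strictly decreasing. Since the map $E_i^j\mapsto\top_i(E_i^j)$ is injective into the set of students acceptable to $i$, which is in bijection with $E(i)$, we get $k_i\le|E(i)|$.

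\textbf{Edge cases.} If some stable set is empty, Theorem~\ref{thm:rural} makes all of $i$'s stable sets empty, so $k_i=1$ and the statement is trivial; the nonemptiness hypothesis covers the indexing. The only delicate step is ruling out the configuration where both differences are ``best'' (or both ``worst''). This is exactly where the equal cardinality from the Rural Hospitals Theorem is needed, so I would state that step carefully.
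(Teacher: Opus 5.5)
Your proof is correct and follows the paper's route. The paper derives the corollary directly from Theorem~\ref{fact:stable_diff}: distinct nonempty stable sets have distinct top and cutoff students, so indexing by tops gives the chain, and injectivity of $E_i^j\mapsto\top_i(E_i^j)$ gives $k_i\le|E(i)|$. Your version is more careful than the paper's one-line justification in two places. It makes explicit the equal-cardinality step from Theorem~\ref{thm:rural}. It also shows that tops and cutoffs are ordered in the same direction for every pair, which is what the word ``consequently'' in the statement silently relies on.
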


% The following lemma formalizes the bounds on these sets, showing that the number of stable sets per institution is strictly limited, and that each set can be uniquely identified by its best or worst student.

% \begin{lemma}\label{lem:stabsetnum}For any institution $i$, the number of possible stable sets is bounded by the number of acceptable students, i.e., $k_i\le |E(i)|$.
% \zhnew{This step takes |E| times the running time of DA.}
% \end{lemma}

\subsection{The Algorithm}

With the structure of stable sets established, we now present our algorithm, which operates in three distinct phases.

\paragraph{Step 1: Compute Stable Sets.} For each institution $i \in I$, we compute its complete collection of stable sets $E_i^1, \dots, E_i^{k_i}$. Given an institution $i$ with preferences $\succ_i = s_1 \succ_i \dots \succ_i s_{|E(i)|}$, we identify these sets using the following iterative procedure:

\begin{enumerate}
    \item \textbf{Initialization:} Execute the student-proposing Deferred Acceptance (DA) algorithm to obtain the student-optimal (and thus institution-pessimal) stable matching $M_0$. Set $j=0$ and initialize the forbidden edge set $F_0 = \emptyset$.
    
    \item \textbf{Empty Case:} If $M_0(i) = \emptyset$, the Rural Hospitals Theorem implies this is the only stable set for $i$; we terminate the process for this institution.
    
    \item \textbf{Iterative Discovery:} If $M_0(i) \neq \emptyset$, we find subsequent stable sets by forbidding the current worst student and resuming DA:
    \begin{itemize}
        \item Identify the least-preferred student at $i$ in the current matching: $s_{j+1}' = \worst_i(M_j)$.
        \item Update the forbidden edge set: $F_{j+1} = F_j \cup \{ (s_{j+1}', i) \}$.
        \item Continue the DA process from the current state with the updated set $F_{j+1}$ to find the next stable matching $M_{j+1}$. (Note: Since $F_j \subseteq F_{j+1}$, we resume the algorithm rather than restarting.)
        \item Increment $j$.
    \end{itemize}
    
    \item \textbf{Termination:} The process stops when no stable matching $M_{j+1}$ exists under $F_{j+1}$. The distinct sets $M_0(i), \dots, M_j(i)$ constitute the stable sets for $i$, ordered such that their most-preferred students get more and more preferred by $i$.
\end{enumerate}

\paragraph{Step 2: Linearize Objective Functions into Edge Weights.} Let the stable sets of $i$ be $E_i^1,\dots, E_i^{k_i}$.
We transform the set evaluations $g_i(E_i^j)$ into linear edge weights $w(e)$. 

If $E_i^1 = \emptyset$, every stable matching avoids $i$, and $g_i(M)$ is simply $g_i(\emptyset)$ for all stable matchings and it is equivalent to minimize $g(M)-g_i(M)$. Otherwise, we exploit the fact that each stable set $E_i^j$ is uniquely identified by its cutoff student $s_j = \worst_i(E_i^j)$. Let $\worst_i(E_i^1) \succ_i \dots \succ_i \worst_i(E_i^{k_i})$.

We assign $w(e) = 0$ for all edges that do not contain a cutoff student. For the cutoff students, we define weights recursively for $j = 1, \dots, k_i$:
\[
\w(s_j, i) = g_i(E_i^j) - \sum_{e \in E_i^j \setminus \{(s_j, i)\}} w(e)
\]
This recursive definition ensures that for any stable set $E_i^j$, the sum of the weights of its constituent edges exactly matches the function evaluation: $\sum_{e \in E_i^j} \w(e) = g_i(E_i^j)$.

\paragraph{Step 3: Minimum Weight Stable Matching.}
Finally, we utilize the edge weights from Step 2 to solve the global problem. Since any stable matching $M$ must restrict to exactly one stable set $E_i^j$ for each institution $i$, the total weight of the matching corresponds perfectly to our original objective: $w(M) = \sum_{i \in I} g_i(M(i))$. 

We find the minimum $w$-weight stable matching by employing efficient algorithms for the many-to-one stable matching polytope or the rotation digraph method~\citep{irving1987efficient, eirinakis2012finding}. The overall complexity is $\mathcal{O}(|S \cup I|^2 (|S \cup I| \log^2 |S \cup I| + T_g))$, where $T_g$ is the time required to evaluate the functions $g_i$ on the stable sets.

\subsection{Correctness of the Algorithm}

% We next show that the algorithm correctly computes a stable matching M that minimizes the aggregate
% institutional cost.

We now demonstrate that the algorithm correctly identifies a stable matching $M$ that minimizes the aggregate institutional cost.

% \begin{claim}\label{claim:step1} Step 1 successfully computes all possible stable sets of each $i\in I$.
% \end{claim}
% \begin{proof}
% By Theorem~\ref{fact:stable_diff}, the initial DA algorithm finds the stable set of $i$ containing the absolute worst possible cut-off student across all stable matchings. By Lemma~\ref{lem:stabsetnum}, this specific student is not included in any other stable set, since she would need to be the cut-off student in that set too. Therefore, by forbidding this student's edge and running DA (with breakmarriage for forbidden edges) again, we successfully find the stable set of $i$ with the second-worst cut-off student by Theorem~\ref{thm:forbiddenopt}. This second student similarly cannot be included in any remaining stable sets, allowing us to iteratively uncover every stable set without skipping any.
% \end{proof}

\begin{claim}\label{claim:step1} 
Step 1 successfully identifies the complete collection of stable sets for each institution $i \in I$.
\end{claim}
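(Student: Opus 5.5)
We argue by induction on $j$ that $M_j$ is the student-optimal stable matching among those avoiding $F_j$, and that $M_j(i)$ is the stable set of $i$ with the $(j+1)$-st worst cutoff student. Together with termination, this shows that every stable set appears. The empty case is immediate. If $M_0(i)=\emptyset$, then by Theorem~\ref{thm:rural} every stable matching leaves $i$ empty, so $\emptyset$ is the unique stable set. Assume from now on that $M_0(i)\neq\emptyset$. By Theorem~\ref{thm:rural} every stable set of $i$ is then nonempty of the same size, and by Corollary~\ref{cor:stabsets} the stable sets can be indexed by their cutoffs $\worst_i(E^1_i)\succ_i\dots\succ_i\worst_i(E^{k_i}_i)$, all distinct.

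The key structural fact is the following. If $s$ is the cutoff of a stable set $E^x_i$, then $s$ belongs to no stable set $E^y_i$ with a strictly better cutoff. To see this, apply Theorem~\ref{fact:stable_diff} to matchings realizing $E^x_i$ and $E^y_i$. Since both sets have the same size and $E^y_i$ has the better cutoff, $E^x_i\setminus E^y_i$ must consist of the worst elements of $E^x_i\cup E^y_i$. This set contains $s=\worst_i(E^x_i)$, because $s$ is worse than $\worst_i(E^y_i)$ and hence is not in $E^y_i$. By the same argument, any stable set containing $s$ must have cutoff weakly worse than $s$.

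For the base case, $M_0$ is institution-pessimal, so by Theorem~\ref{fact:stable_diff} $M_0(i)$ has the worst cutoff. Thus $M_0(i)=E^{k_i}_i$, with the stable sets now ordered by cutoff from worst. For the inductive step, suppose the forbidden edges $F_{j+1}$ are exactly the edges from $i$ to the cutoffs of the $j+1$ worst stable sets. By the fact above, the stable matchings avoiding $F_{j+1}$ are exactly those in which $i$ receives one of the remaining stable sets. These sets all have strictly better cutoffs, and every such matching avoids all cutoffs of worse sets, since those cutoffs are worse than its own cutoff.

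By Theorem~\ref{thm:forbiddenopt}, resuming DA with \texttt{BreakMarriage} on the accumulated set $F_{j+1}$ yields the student-optimal stable matching $M_{j+1}$ among those avoiding $F_{j+1}$, or fails exactly when there is none. Resuming rather than restarting is valid because rejections made under $F_j$ remain valid under $F_{j+1}$, as the paper argues in Section~\ref{sec:DA-maxviol}. It remains to show that $M_{j+1}(i)$ has the worst cutoff among the remaining sets. I expect this to be the main obstacle. The restricted family is a sublattice of stable matchings, because the meet and join operations take institution-wise best or worst sets, and by Theorem~\ref{fact:stable_diff} these are again among the admissible sets. Student-optimality of $M_{j+1}$ then gives institution-pessimality within this sublattice. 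Applying Theorem~\ref{fact:stable_diff} pairwise, $M_{j+1}(i)$ has weakly worst cutoff among all matchings in the family, which is exactly the next stable set in the order.

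Finally, the procedure stops precisely when no stable matching avoids $F_{j+1}$. By the characterization in the inductive step, this happens exactly when all $k_i$ stable sets have been enumerated. Hence the output $M_0(i),\dots,M_j(i)$ is the full collection of stable sets, listed with increasingly preferred cutoffs, and therefore, by Corollary~\ref{cor:stabsets}, with increasingly preferred top students as well.
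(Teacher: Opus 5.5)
Your proof is correct and follows essentially the same route as the paper's: induction on the iterations, using that a forbidden cutoff lies in no stable set with a strictly better cutoff (so forbidding it removes exactly the stable sets already found), and then Theorem~\ref{thm:forbiddenopt} to obtain the next-worst cutoff. You also make explicit two steps the paper only asserts, namely the exact characterization of the stable matchings avoiding $F_{j+1}$ and why student-optimality within that sublattice forces the worst remaining cutoff at $i$; the fact $s\notin E^y_i$ itself follows directly from $\worst_i(E^y_i)\succ_i s$ and does not need Theorem~\ref{fact:stable_diff}.
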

\begin{proof}
By the properties of the stable matching lattice, the initial student-proposing DA algorithm identifies the stable matching $M_0$ that is institution-pessimal. This matching $M_0(i)$ corresponds to the stable set of $i$ with the least-preferred possible cutoff student across all stable matchings. By Corollary~\ref{cor:stabsets}, this cutoff student $s_0$ is unique to $M_0(i)$ and cannot serve as the cutoff for any other stable set. 

By forbidding the edge associated with this cutoff student and continuing the DA procedure, we effectively remove $M_0$ from the feasible set, but no stable matching $M$ with $M(i)\ne M_0(i)$. According to Theorem~\ref{thm:forbiddenopt}, each subsequent iteration $k$ identifies the institution-pessimal stable matching among the remaining set of stable matchings with a cutoff student $s_k$. By Corollary~\ref{cor:stabsets}, this cutoff student cannot serve as the cutoff for any other stable set not yet identified, as it is the worst possible student for $i$ apart from $s_0,\dots, s_{k-1}$, for whom we have identified all possible stable sets containing them by induction. 

By repeating this process, we iteratively ``sweep'' upward through the lattice, uncovering each distinct stable set in increasing order of institutional preference. 
Since the number of such sets is finite and strictly ordered by their cutoff students, this procedure uncovers every stable set without omission.
\end{proof}

\begin{claim}
For any institution $i \in I$ and any stable set $E_i^j$, the constructed weight function $w$ satisfies:
\[ \sum_{e \in E_i^j} w(e) = g_i(E_i^j) \]
\end{claim}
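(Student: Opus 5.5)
The plan is to show that the recursion in Step 2 is well defined, meaning it never refers to a weight that has not yet been assigned, and then read the identity off the defining equation. Fix an institution $i$. If $E_i^1=\emptyset$, then by the Rural Hospitals Theorem (Theorem~\ref{thm:rural}) every stable set of $i$ is empty. In that case $i$ was removed from the objective as a constant $g_i(\emptyset)$, and the claim is to be read as holding up to this additive constant, with an empty sum on the left. So assume $E_i^1\neq\emptyset$. By Theorem~\ref{thm:rural}, all stable sets have the same cardinality $|M_0(i)|>0$, so all of them are nonempty. Corollary~\ref{cor:stabsets} then gives well-defined, pairwise distinct cutoff students $s_j=\worst_i(E_i^j)$ with $s_1\succ_i\dots\succ_i s_{k_i}$.

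The key step, and the only real obstacle, is to check that when $\w(s_j,i)$ is defined, every edge in $E_i^j\setminus\{(s_j,i)\}$ already has a fixed weight, and that this weight is not later overwritten. Take any $(s,i)\in E_i^j$ with $s\neq s_j$. Since $s_j$ is the least-preferred student of $E_i^j$, we have $s\succ_i s_j$. There are two cases.
\begin{itemize}
\item If $s$ is not a cutoff student of any stable set, its weight is $0$ by definition, and it never changes.
\item If $s=s_m$ for some $m$, then $s_m\succ_i s_j$ forces $m<j$ by the strict ordering of the cutoffs. Hence $\w(s_m,i)$ was assigned at an earlier step of the recursion. Moreover, each cutoff edge is assigned exactly once, because the cutoffs are distinct.
\end{itemize}
In particular, no cutoff $s_m$ with $m>j$ lies in $E_i^j$, since such an $s_m$ is strictly worse than $s_j=\worst_i(E_i^j)$. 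Thus the right-hand side of the recursion for index $j$ involves only final values, and $(s_j,i)$ itself does not appear in the subtracted sum.

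Given this, the identity follows directly. Splitting off the cutoff edge, we get
\[
\sum_{e\in E_i^j}\w(e)=\w(s_j,i)+\sum_{e\in E_i^j\setminus\{(s_j,i)\}}\w(e)=g_i(E_i^j),
\]
where the second equality is the defining equation. Here we use that the weights in the sum are the same ones used in the recursion at step $j$, because none of them is modified afterwards. I would present this as an induction on $j$: assume weights for $s_1,\dots,s_{j-1}$ are fixed and the identity holds for $E_i^1,\dots,E_i^{j-1}$; then the case analysis above establishes step $j$. The only delicate points are the two uses of Corollary~\ref{cor:stabsets}, namely the distinctness of the cutoffs and their strict order. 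Both rely on nonemptiness, which is why the empty case is handled separately.
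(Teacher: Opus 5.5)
Your proof is correct and follows the paper's argument: induction on $j$ in order of decreasing cutoff, using Corollary~\ref{cor:stabsets} to see that every edge of $E_i^j$ other than $(s_j,i)$ is either a non-cutoff edge of weight $0$ or a cutoff $s_m$ with $m<j$ whose weight is already fixed, so the defining equation gives the identity. You also treat the case $E_i^1=\emptyset$ explicitly, where the literal identity would need $g_i(\emptyset)=0$; the paper handles this only implicitly, by dropping $i$ from the objective in Step 2.
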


\begin{proof}
We proceed by induction on $j$, the index of the stable sets as ordered from institution-optimal to institution-pessimal ($j = 1, \dots, k_i$).

\textbf{Base Case ($j=1$):} 
Consider the institution-optimal stable set $E_i^1$ and its cutoff student $s_1 = \worst_i(E_i^1)$. By the recursive definition in Step 2:
\[ w(s_1, i) = g_i(E_i^1) - \sum_{e \in E_i^1 \setminus \{(s_1, i)\}} w(e) \]
Rearranging terms, we obtain $\sum_{e \in E_i^1} w(e) = g_i(E_i^1)$. For any edge $e \in E_i^1$ where $e$ is not the cutoff $(s_1, i)$, the weight $w(e)$ is $0$ because such an edge cannot be a cutoff student for any set more preferred than the institution-optimal set.

\textbf{Inductive Step:} 
Assume the claim holds for all stable sets $E_i^1, \dots, E_i^{j-1}$. Consider the $j$-th stable set $E_i^j$ and its unique cutoff student $s_j = \worst_i(E_i^j)$. Our construction specifies:
\[ w(s_j, i) = g_i(E_i^j) - \sum_{e \in E_i^j \setminus \{(s_j, i)\}} w(e) \implies \sum_{e \in E_i^j} w(e) = g_i(E_i^j) \]
The validity of this construction relies on the property that any edge $e \in E_i^j$ other than the current cutoff $(s_j, i)$ must either have a weight of $0$ or be a cutoff student $s_m$ for some stable set $E_i^m$. By Corollary~\ref{cor:stabsets}, if $s_m \in E_i^j$ for $m \neq j$, then $s_m$ must be strictly more preferred by institution $i$ than $s_j$. Because we process stable sets in order of decreasing institutional preference, the weight $w(s_m, i)$ was uniquely determined in a previous step $m < j$. Since the definition of $w(s_j, i)$ accounts for the cumulative weights of all such preceding cutoffs present in $E_i^j$, the sum is guaranteed to equal $g_i(E_i^j)$.
\end{proof}

\begin{theorem}\label{thm:main}A stable matching $M$ that minimizes $\sum_{i\in I}g_i(M)$ can be found in $O(|S\cup I|^2(|S\cup I|\log^2|S\cup I|+ T_g))$, where $T_g$ is the time to evaluate the functions $g_i$ on the stable sets ($T_g$ is polynomial, whenever the $g_i$ objectives are efficiently computable, as in our examples). Furthermore, we can define edge weights $\w(e)$ such that the optimal stable matchings correspond exactly to minimum $\w$-weight stable matchings. 
\end{theorem}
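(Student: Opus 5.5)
The proof assembles the two claims just established and then bounds the running time.

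The first step is correctness of the reduction. By Claim~\ref{claim:step1}, Step 1 lists every stable set $E_i^1,\dots,E_i^{k_i}$ of each institution. For any stable matching $M$, the restriction $M(i)$ equals some $E_i^j$, so the preceding claim gives $\sum_{e\in M(i)}\w(e)=g_i(M(i))$. If $E_i^1=\emptyset$, every stable matching avoids $i$, so $g_i(M(i))=g_i(\emptyset)$ is a constant; we set $\w\equiv 0$ on $E(i)$ and add this constant. Summing over $i$ yields
\[ \w(M)=\sum_{i\in I} g_i(M(i)) - \kappa \]
for every $M\in\stabset$, where the constant $\kappa$ is independent of $M$. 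Hence the minimizers of the objective over $\stabset$ are exactly the minimum $\w$-weight stable matchings, which is the second sentence of Theorem~\ref{thm:main}. The minimum $\w$-weight stable matching is then found by the rotation-poset/min-cut method of \citet{irving1987efficient}, in its many-to-one form from \citet{eirinakis2012finding}.

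One subtlety needs checking: the weights are defined per institution, and each edge $(s,i)$ belongs to exactly one institution's $E(i)$. So the recursive definitions for different institutions never conflict, and within an institution each cutoff student appears as the cutoff of exactly one stable set by Corollary~\ref{cor:stabsets}. The value $\w(s_j,i)$ is therefore well defined. It uses only earlier cutoffs $s_m\succ_i s_j$ with $m<j$, and every non-cutoff edge in $E_i^j$ already has weight $0$.

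The remaining step is the running-time bound, which I expect to be the main obstacle, since naive accounting gives one DA run per stable set per institution. I would argue as follows. Each institution has $k_i\le |E(i)|$ stable sets. Step 1 for a fixed $i$ resumes DA incrementally with a growing forbidden set, so by the same amortization as in Lemma~\ref{lemma:DC-DA:running_time} it costs $O(|E|)$ per institution, or $O(|I||E|)\subseteq O(|S\cup I|^3)$ in total. Evaluating $g_i$ on at most $\sum_i k_i\le |E|\le |S\cup I|^2$ sets contributes the $|S\cup I|^2T_g$ term, where $T_g$ is taken as the time per evaluation. Computing the weights is linear in $\sum_i\sum_j|E_i^j|$, and this is dominated by the cost of the evaluations. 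The minimum-weight stable matching computation, via the rotation digraph with $O(|E|)$ rotations and a max-flow, fits in $O(|S\cup I|^3\log^2|S\cup I|)$. Adding these terms gives the stated bound.
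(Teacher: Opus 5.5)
Your proposal is correct and follows essentially the same route as the paper. The correctness of the reduction comes from the two preceding claims, and the running time is the same three-part accounting: incremental DA in $O(|E|)$ per institution for Step 1, one $g_i$ evaluation per stable set for Step 2, and the cited minimum-weight stable matching algorithm of \citet{eirinakis2012finding} for Step 3. Your explicit handling of the empty-stable-set constant $\kappa$ and of the well-definedness of the recursive weights is a useful elaboration that the paper's proof leaves implicit.
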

\begin{proof} Step 1 runs a DA with increasing the sets of forbidden edges in a way such that proposals happen at most once, so runs in $O(|E|)$ time per institution, so finding all stable sets in $O(|E|\cdot |I|)\le O(|S\cup I|^3)$. 

Step 2 solves a recursive formula that assigns valid edge weights in at most $|E(i)|$ steps, each requiring a single evaluation of $g_i$, in total requiring $O(|E|\cdot T_g)=O(|S\cup I|^2\cdot T_g)$ time. 

For Step 3, we can use the algorithm of \cite{eirinakis2012finding} that runs in $O(|S\cup I|^3\log^2|S\cup I|)$ time to find a minimum weight stable matching. %Finally, by Theorem~\ref{thm:rural}, we can assume that for any institution $i$, $q_i\le |M_0(i)|+1$, where $M_0$ is the student optimal stable matching (otherwise we can decrease to $|M(i)|+1$ without affecting the set of stable matchings). Hence, we also get that $|I'|\le |I|+|M|\le |I\cup S|$ and therefore $O(|S\cup I'|^3\log^2|S\cup I'|)=O(|S\cup I|^3\log^2|S\cup I|)$.

Summarizing the running times of all steps we get $O(|S\cup I|^2(|S\cup I|\log^2|S\cup I|+ T_g))$ as claimed.
\end{proof}

\begin{remark}\label{rem:maxcase}To find a stable matching that minimizes the egalitarian objective $\max_{i\in I}g_i(M)$, we can use a thresholding approach. Let $R_1\le R_2\le \dots \le R_k$ be the distinct values of $g_i(E_i^j)$ across all $i,j$. Note that $k \le \sum_{i\in I}|E(i)| = |E|$. We define a binary threshold function $g_i^l(E_i^j)=0$ if $g_i(E_i^j)\le R_l$, and $1$ otherwise. We iterate $l=1,2,\dots, k$ and run our general algorithm to minimize $\sum_{i\in I}g_i^l(M)$ until we find a stable matching with a total sum of $0$. This requires at most $|E|$ iterations.\end{remark}

\subsection{Additional Properties}

\begin{lemma}\label{lem:l1-lattice}The optimal stable matchings with respect to $\sum_{i\in I}g_i(M)$ (or $\max_{i\in I}g_i(M)$) form a lattice, and thus admit a student-optimal solution.\end{lemma}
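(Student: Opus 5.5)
The plan is to show that the set $\stabset^*$ of optimal stable matchings is closed under the lattice operations meet $M\wedge M'$ and join $M\vee M'$ of the stable matching lattice. From Theorem~\ref{fact:stable_diff} we take the standard description of these operations in the many-to-one setting. For two stable matchings $M,M'$ and each institution $i$, $M(i)$ and $M'(i)$ have the same size by Theorem~\ref{thm:rural}. Also, $M(i)\setminus M'(i)$ consists of the best or of the worst elements of $M(i)\cup M'(i)$. Hence the set of the best $|M(i)|$ students of $M(i)\cup M'(i)$ under $\succ_i$ is one of $M(i)$ or $M'(i)$, and the same holds for the set of the worst $|M(i)|$ students. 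The matching in which every institution simultaneously takes its best $|M(i)|$ students is stable and is the institution-preferred lattice operation; taking the worst $|M(i)|$ students gives the student-preferred operation. I will use these classical facts, which are implicit in the lattice property already invoked in the proof of Theorem~\ref{thm:DC-DA-most-diverse}.

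The key observation is that for each $i$ the multiset $\{(M\vee M')(i),(M\wedge M')(i)\}$ equals $\{M(i),M'(i)\}$. If $M(i)=M'(i)$ this is trivial. Otherwise the best-$|M(i)|$ set and the worst-$|M(i)|$ set are distinct, so they are exactly $M(i)$ and $M'(i)$ in some order. Summing over $i$ gives
\[
\sum_{i\in I} g_i((M\vee M')(i))+\sum_{i\in I} g_i((M\wedge M')(i))=\sum_{i\in I} g_i(M(i))+\sum_{i\in I} g_i(M'(i)).
\]
If $M,M'$ are both optimal with value $g^*$, the right side is $2g^*$. Each term on the left is at least $g^*$ because both matchings are stable, so both equal $g^*$. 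Thus $\stabset^*$ is a sublattice. For the egalitarian objective the argument is simpler: $\max_i g_i((M\vee M')(i))\le \max\{\max_i g_i(M(i)),\max_i g_i(M'(i))\}$, since each coordinate value is taken from $M$ or $M'$, and the same bound holds for $M\wedge M'$. So optimal matchings are again closed under both operations.

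Finally, $\stabset^*$ is a finite, nonempty sublattice, so taking iterated student-preferred joins over all its elements yields a single stable matching in $\stabset^*$. Every student weakly prefers it to any other optimal matching, and this is the student-optimal solution; the institution-optimal one follows symmetrically.

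The main obstacle is making rigorous that the componentwise "best $|M(i)|$" and "worst $|M(i)|$" choices form stable matchings that coincide with the lattice join and meet. This requires checking that each student receives exactly one seat in each. I would cite this as the standard result of Baïou--Balinski / Roth--Sotomayor rather than reprove it. If a self-contained argument were needed, I would derive it from the student side: let every student take her preferred or less preferred partner of $M(s),M'(s)$. Then Theorem~\ref{fact:stable_diff} shows that the institutions' resulting sets are exactly the worst or best $|M(i)|$ sets, and stability follows by the usual argument.
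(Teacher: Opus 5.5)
Your overall strategy is sound and differs from the paper's. You show that $\{(M\vee M')(i),(M\wedge M')(i)\}=\{M(i),M'(i)\}$ for every institution $i$, with $M\vee M'$ the student-preferred operation as in the paper, and then read off modularity of the sum and the bound for the max. But the justification you give for this identity is false. The worst $|M(i)|$ students of $M(i)\cup M'(i)$ need not form $M(i)$ or $M'(i)$, and the student-preferred operation does not let every institution take its worst $|M(i)|$ students.

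Here is a counterexample. Take $i$ with $q_i=2$ and $j$ with $q_j=1$, and students $a,b,c$, where $c$ is acceptable only to $i$. Let the preferences be $c\succ_i a\succ_i b$, $b\succ_j a$, $j\succ_a i$, $i\succ_b j$. Both $M=\{(c,i),(a,i),(b,j)\}$ and $M'=\{(c,i),(b,i),(a,j)\}$ are stable, and $M\vee M'=M'$. Yet the worst two students of $M(i)\cup M'(i)=\{a,b,c\}$ are $\{a,b\}$, which is neither $M(i)$ nor $M'(i)$. Letting each institution take its worst students even assigns $a$ to both $i$ and $j$.

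The source of the error is the union form of Theorem~\ref{fact:stable_diff}, which is stronger than what Ba\"{\i}ou--Balinski prove. In this example $M(i)\setminus M'(i)=\{a\}$ is neither the best nor the worst element of $\{a,b,c\}$. What actually holds is a dichotomy on the symmetric difference: either every student of $M(i)\setminus M'(i)$ is preferred by $i$ to every student of $M'(i)\setminus M(i)$, or vice versa. Students common to both sets can rank above the exclusive part of the set $i$ prefers, as $c$ does here. Your fallback argument in the last paragraph relies on the same description, so as written your key observation is unproved.

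The identity itself is true, and the correct dichotomy repairs it. Let $M(i)\neq M'(i)$. Suppose every student of $M(i)\setminus M'(i)$ beats every student of $M'(i)\setminus M(i)$; the latter set is nonempty by Theorem~\ref{thm:rural}, and the other case is symmetric.
\begin{enumerate}
\item Each $s\in M(i)\setminus M'(i)$ must prefer $M'(s)$ to $i$, since otherwise $(s,i)$ blocks $M'$.
\item Hence $(M\vee M')(i)\subseteq M'(i)$ and $(M\wedge M')(i)\supseteq M(i)$.
\item Both operations yield stable matchings, so Theorem~\ref{thm:rural} gives $|(M\vee M')(i)|=|(M\wedge M')(i)|=|M(i)|$.
\item Therefore $(M\vee M')(i)=M'(i)$ and $(M\wedge M')(i)=M(i)$.
\end{enumerate}
With this, the rest of your proof goes through unchanged.

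This is a genuinely different route from the paper's. The paper never describes join and meet at the institutions. Instead it invokes Theorem~\ref{thm:main}, plus the threshold reduction of Remark~\ref{rem:maxcase} for the max objective, to replace the $g_i$ by linear weights $\w$. For these weights, $\w(M)+\w(M')=\w(M\vee M')+\w(M\wedge M')$ is immediate, because each student's two edges are merely redistributed. Your repaired argument is independent of the linearization machinery and treats the egalitarian objective directly. The price is that it needs the institution-side identity.
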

\begin{proof}By Theorem~\ref{thm:main} and Remark~\ref{rem:maxcase}, these optimal stable matchings correspond to minimum weight stable matchings with respect to some weight function $\w$. Let $M$ and $M'$ be two such optimal matchings, meaning $\w(M)=\w(M')$. If we define $M\vee M'$ and $M\wedge M'$ as the matchings where students select their more preferred and less preferred institution between $M$ and $M'$ respectively, then it is well-known that these operations produce stable matchings (the join and meet) \citep{irving1987efficient}. Since each edge is counted the same amount of times in $M$,$M'$ and in $M\vee M'$, $M\wedge M'$ we have $\w(M)+\w(M')=\w(M\wedge M')+\w(M\vee M')$. Since $M$ and $M'$ are already minimal, $M\wedge M'$ and $M\vee M'$ cannot have strictly lesser weights. Therefore, we must have $\w(M\wedge M')=\w(M\vee M')=\w(M)$. This implies the set of optimal matchings is closed under meet and join operations, forming a lattice, which inherently guarantees the existence of a student-optimal solution.
\end{proof}
\begin{remark}If we are given an external cost function $z:E\to \mathbb{R}$, we can find a minimum $z$-cost \utilopt\ stable matching by defining a combined objective to make the value of $g_i$ lexicographically more important: $$g_i(M)=\frac{g_i(M)}{\min\limits_{\{i,j,j'\} :g_i(E_i^j)\ne g_i(E_i^{j'})}{|g_i(E_i^j)-g_i(E_i^{j'})|}}\cdot \max_{e\in E}|z(e)|\cdot (|E|+1) +\sum_{e\in M(i)}z(e).$$ 
\end{remark}

By Lemma~\ref{lem:l1-lattice}, if we assign small "tie-breaking" costs according to student preferences (where highly preferred edges have slightly smaller cost) as $z$, the computed minimum weight stable matching will naturally give us the student-optimal \utilopt\ or \egalopt\  stable matching (for the latter, using the fact that the \egalopt\ stable matchings are exactly the \diverse\ ones with the optimal binary threshold functions from Remark~\ref{rem:maxcase}).

\begin{theorem}
   A student optimal \utilopt\ stable matchings can be found in time $O(|S\cup I|^2(|S\cup I|\log^2|S\cup I|+ T_g))$ and an \egalopt\ stable matching can be found in time $O(|E||S\cup I|^2(|S\cup I|\log^2|S\cup I|+ T_g))$.
\end{theorem}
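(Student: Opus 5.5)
The plan is to reduce both claims to a single minimum-weight stable matching computation with a perturbed weight function, combining Theorem~\ref{thm:main}, Remark~\ref{rem:maxcase}, Lemma~\ref{lem:l1-lattice} and the lexicographic remark.

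\textbf{Utilitarian case.} First run Steps~1 and~2 of the general algorithm to obtain the stable sets $E_i^1,\dots,E_i^{k_i}$ and weights $\w$ with $\w(M)=\sum_{i\in I}g_i(M(i))$ for every stable matching $M$. Next, define a tie-breaking cost $z$ that favours students' preferred edges: let $z(s,i)$ be the rank of $i$ in $s$'s list, scaled by a small factor. Then combine $\w$ and $z$ as in the lexicographic remark. The scaling is chosen so that the total $z$-contribution of any matching, at most $\max_e|z(e)|\cdot|E|$, is strictly smaller than the smallest nonzero gap between distinct values of $\w$ on stable matchings. This gap is at least the minimum difference between distinct values $g_i(E_i^j)$, since $\w(M)$ is a sum of such values. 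Under this combination, every minimizer of the combined weight is \utilopt.

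Among \utilopt\ matchings, the combined weight then minimizes $\sum_s \mathrm{rank}_s(M(s))$. By Lemma~\ref{lem:l1-lattice}, the \utilopt\ matchings form a lattice with a student-optimal element $M^*$. Since $M^*(s)\succeq_s M(s)$ for all $s$ and every optimal $M$, and by the Rural Hospitals Theorem the same students are matched in each, $M^*$ is the unique minimizer of the rank sum. Any other optimal $M$ has some student strictly worse off, hence a strictly larger rank sum. A single call to the minimum-weight stable matching algorithm of \cite{eirinakis2012finding} then yields the stated time bound. The extra cost of computing $z$ and rescaling is $O(|E|)$, which is dominated.

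\textbf{Egalitarian case.} Follow Remark~\ref{rem:maxcase}. Iterate over the at most $|E|$ thresholds $R_l$, and for each one solve the utilitarian problem with the binary functions $g_i^l$ until the optimum is $0$. Each iteration costs $O(|S\cup I|^2(|S\cup I|\log^2|S\cup I|+T_g))$, giving the extra factor $|E|$. At the first successful $l$, the \egalopt\ matchings are exactly the stable matchings with $\sum_i g_i^l=0$. This set is a lattice by Lemma~\ref{lem:l1-lattice}, so the same tie-breaking run at that final threshold returns a student-optimal \egalopt\ matching within the same bound.

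\textbf{Main obstacle.} The delicate step is making the lexicographic scaling rigorous. One must check that the objective gap on stable matchings is bounded below by the minimal nonzero difference of the $g_i(E_i^j)$ values, and that the $z$-perturbation never exceeds this gap. The cleanest route is to scale $\w$ by $(|E|+1)\max_e|z(e)|$ divided by that minimal difference, exactly as in the remark. This is cleanest when the $g_i$ values are integral, as they are in all the examples. Otherwise the gap of $\w$ on whole matchings may be smaller than the per-institution gap, and I would instead compare the two candidate optima directly.
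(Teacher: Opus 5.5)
Your route is the paper's: combine the linearized weights $\w$ with student-rank tie-breaking costs $z$ via the lexicographic remark, use Lemma~\ref{lem:l1-lattice} to identify the minimizer as the student-optimal optimum, and wrap the threshold loop of Remark~\ref{rem:maxcase} around it for the egalitarian case. The genuine gap is the separation step in the utilitarian case.

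You claim the smallest nonzero gap of $\w(M)$ over stable matchings is at least the minimum difference between distinct values $g_i(E_i^j)$ of a single institution, ``since $\w(M)$ is a sum of such values''. This is false, and integrality does not rescue it: sums of values from different institutions can be much closer than any per-institution difference. Here is a counterexample.
\begin{itemize}
\item Take students $s_1,s_2$ and unit-capacity institutions $i_1,i_2$ with $s_1\colon i_1\succ i_2$, $s_2\colon i_2\succ i_1$, $i_1\colon s_2\succ s_1$, $i_2\colon s_1\succ s_2$.
\item The stable matchings are $M=\{(s_1,i_1),(s_2,i_2)\}$ and $M'=\{(s_1,i_2),(s_2,i_1)\}$.
\item Set $g_{i_1}(\{s_1\})=101$, $g_{i_2}(\{s_1\})=100$ and $g_{i_1}(\{s_2\})=g_{i_2}(\{s_2\})=0$. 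Then $\w(M)=101>\w(M')=100$, so $M'$ is the unique utilitarian-optimal matching.
\item The per-institution minimum difference is $100$, while the true gap is $1$.
\item With rank costs ($\max_e|z(e)|=2$, $|E|=4$), the remark's objective $\frac{\w(\cdot)}{100}\cdot 2\cdot 5+z(\cdot)$ equals $12.1$ at $M$ and $14$ at $M'$. So the computation returns $M$.
\end{itemize}
Shrinking $z$ by a small factor changes nothing, because that objective is positively homogeneous in $z$. The lexicographic remark normalizes by exactly the same per-institution quantity, so doing it ``exactly as in the remark'' inherits the failure rather than repairing it.

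For integer-valued $g_i$ (all the paper's examples, and the threshold functions $g_i^l$), the correct fact is that distinct values of $\w(M)=\sum_i g_i(M(i))$ over stable matchings differ by at least $1$. Since any two matchings' $z$-values differ by at most $|E|\max_e|z(e)|$, you should minimize $(|E|+1)\max_e|z(e)|\cdot\w(M)+z(M)$ \emph{without} dividing by the per-institution difference. Rational data can be handled by first clearing denominators. For arbitrary real $g_i$, no constant computable from per-institution values certifies separation. Your fallback of ``comparing the two candidate optima directly'' is not an algorithm, since those candidates are not available. You would need an exact lattice-based device instead. For example, in the rotation-poset min-cut formulation of minimum-weight stable matching, take the inclusion-minimal minimum-weight closed set, read off from an extremal minimum cut; minimum cuts are closed under union and intersection for the same modularity reason as in Lemma~\ref{lem:l1-lattice}. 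That closed set is the student-optimal optimal matching.

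The rest of your argument is sound. The rank-sum minimizer is indeed unique, via the student-optimal element and the Rural Hospitals Theorem. The egalitarian part works as written, because the $g_i^l$ take values in $\{0,1\}$: the normalizing constant there is $1$ and the total gap is at least $1$.
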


\subsection{Extension to Two-Sided Objectives}

A remarkable and immediate consequence of our approach is that it naturally extends to two-sided objective functions. %Because of Theorem~\ref{fact:stable_diff}, analogously to Lemma~\ref{lem:stabsetnum} we get that the set of stable sets is linearly small for all agents on both sides of the market, so we can simultaneously optimize complex set functions for both institutions and students. 

Suppose that, in addition to the institution objectives $g_i$, we are also given polynomial-time computable set functions $h_s:2^{E(s)}\to \mathbb{R}$ for each student $s\in S$. In a standard many-to-one school choice setting, $E(s)$ consists of at most one matched edge, making $h_s$ a simple edge-cost function; however, this extension holds even in many-to-many matching markets, where we also have a capacity $q_s$ for each $s\in S$. We want to find a stable matching $M$ that minimizes the global two-sided objective:$$\sum_{i\in I}g_i(M(i)) + \sum_{s\in S}h_s(M(s))$$

The symmetry of Theorem~\ref{fact:stable_diff} --analogously to Corollary~\ref{cor:stabsets}-- guarantees that students also have a strictly limited number of nonempty stable sets $E_s^1, \dots, E_s^{k_s}$, $k_s\le E(s)$, which can be computed in polynomial time using the exact same iterative DA approach (with the roles of students and institutions reversed).

This allows us to state the following general theorem:

\begin{theorem}\label{thm:two-sided}
Given a many-to-many market $(S,I,E,q_I,q_S,\succ_I,\succ_S)$ with set functions $g_i:2^{E(i)}\to \mathbb{R}$ for all $i\in I$ and $h_s:2^{E(s)}\to \mathbb{R}$ for all $s\in S$, a stable matching $M$ that minimizes $\sum_{i\in I}g_i(M(i)) + \sum_{s\in S}h_s(M(s))$ can be found in $O(|S\cup I|^2(|S\cup I|\log^2|S\cup I|+ T_g+T_h))$ time, where $T_g$ and $T_h$ denote the maximum steps it takes to compute $g_i$ or $h_i$ for a stable set.\end{theorem}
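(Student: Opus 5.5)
The plan is to reduce Theorem~\ref{thm:two-sided} to the one-sided machinery of Section~\ref{sec:main}, applied to both sides of the market at once. The key point is that in a many-to-many market Theorem~\ref{fact:stable_diff} holds for every agent. Hence, exactly as in Corollary~\ref{cor:stabsets}, the nonempty stable sets of any agent $v\in S\cup I$ can be indexed so that their top agents are strictly ordered and their cutoff (worst) agents are strictly ordered as well, and $k_v\le|E(v)|$. In particular, each stable set of $v$ is determined by its cutoff edge.

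\textbf{Step 1: compute all stable sets.} I would compute the stable sets for every agent using the iterative procedure of Step~1. For an institution $i$, start from the institution-pessimal (student-proposing) stable matching, repeatedly forbid $(\worst_i(M_j),i)$, and resume DA. For a student $s$, swap roles and run institution-proposing DA with \texttt{BreakMarriage}, forbidding $s$'s worst edge each time. Correctness follows as in Claim~\ref{claim:step1}, using the many-to-many version of Theorem~\ref{thm:forbiddenopt} (Gusfield--Irving, extended via \citet{baiou2000many} and \citet{eirinakis2012finding}). The Rural Hospitals Theorem, which also holds in the many-to-many setting, covers the empty case.

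\textbf{Step 2: linearize both objectives.} Next I would build two weight functions on $E$. The function $\w_I$ is defined from the $g_i$ by the recursion of Step~2, ordered by decreasing cutoff preference for $i$, so that $\sum_{e\in E_i^j}\w_I(e)=g_i(E_i^j)$. Symmetrically, $\w_S$ is defined from the $h_s$ via the cutoff edges of each student's stable sets, so that $\sum_{e\in E_s^j}\w_S(e)=h_s(E_s^j)$. The induction is the same as in the weight claim above. The point to check is that every non-cutoff edge of $E_v^j$ is either weight $0$ or the cutoff of some $E_v^m$ with $m<j$. This is where the strict ordering of cutoffs from the Corollary~\ref{cor:stabsets} analogue is used. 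Set $\w=\w_I+\w_S$. Every edge lies at exactly one institution and exactly one student. Since $M(v)$ is a stable set of $v$ for every stable $M$, we get
\[
\w(M)=\sum_{i\in I}\sum_{e\in M(i)}\w_I(e)+\sum_{s\in S}\sum_{e\in M(s)}\w_S(e)=\sum_{i\in I}g_i(M(i))+\sum_{s\in S}h_s(M(s)).
\]
Empty stable sets contribute the constants $g_i(\emptyset)$ or $h_s(\emptyset)$, which do not affect the optimum.

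\textbf{Step 3: solve and bound the running time.} Finally I would compute a minimum $\w$-weight many-to-many stable matching with the algorithm of \citet{eirinakis2012finding}. The running time adds up as in Theorem~\ref{thm:main}. Stable-set enumeration takes $O(|E|)$ per agent, so $O(|E||S\cup I|)=O(|S\cup I|^3)$ in total. Weight computation takes $O(|E|(T_g+T_h))$. The min-weight step takes $O(|S\cup I|^3\log^2|S\cup I|)$.

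\textbf{Main obstacle.} I expect the hardest step to be justifying Step~1 on the student side in the many-to-many setting. This needs the symmetric forbidden-edge optimality result and the fact that forbidding only the cutoff edge removes exactly the current stable set and no undiscovered one. I would handle this by repeating the proof of Claim~\ref{claim:step1} verbatim with the roles swapped. The only inputs that proof uses are the strict cutoff ordering and the lattice/pessimality property of the proposing side's DA, and both are symmetric.
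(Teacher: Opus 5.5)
Your proposal is correct and follows essentially the same route as the paper. You compute the stable sets on both sides, linearize each side via the cutoff recursion into $\w_I$ and $\w_S$, and take a minimum $(\w_I+\w_S)$-weight stable matching, adding more detail than the paper's short proof (student-side enumeration by reversed DA, explicit time accounting). One small remark: only the strict ordering of cutoff agents is used, and the accompanying claim that top agents are strictly ordered can fail (two distinct stable sets of an institution can share the same top student), so you should drop it rather than assert it.
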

\begin{proof}We apply Step 1 and Step 2 of our general algorithm to the institutions to compute a linear edge weight function $\w_I(e)$ such that for any stable matching $M$, $\sum_{e \in M} \w_I(e) = \sum_{i\in I}g_i(M(i))$. By the exact same logic, we independently apply Step 1 and Step 2 to the students' side of the market. This yields a second linear edge weight function $\w_S(e)$ such that for any stable matching $M$, $\sum_{e \in M} \w_S(e) = \sum_{s\in S}h_s(M(s))$.

We then define a combined edge weight function as the sum of the two independent weights: $\w(e) = \w_I(e) + \w_S(e)$. Because any stable matching $M$ must restrict to exactly one stable set at each institution and one stable set at each student, the total weight of the matching perfectly matches the two-sided objective:
$$\w(M) = \sum_{e\in M} \w_I(e) + \sum_{e\in M} \w_S(e) = \sum_{i\in I}g_i(M(i)) + \sum_{s\in S}h_s(M(s))$$
Therefore, the problem reduces to computing a minimum $\w$-weight stable matching, which can be solved in polynomial time.
\end{proof}

\section{Conclusions}

In this paper, we resolved the fundamental tension between maintaining strict stability and satisfying complex distributional goals in centralized clearinghouses. Rather than weakening the definition of stability to accommodate rigid quotas, we demonstrated that distributional constraints can be effectively optimized as soft objectives directly over the lattice of standard stable matchings.

Our main contribution is a general, polynomial-time algorithmic framework that optimizes arbitrary, polynomial-time computable institutional and two-sided objectives. By leveraging the limited number of distinct stable sets an institution can face, we mapped non-linear set functions into linear edge weights, elegantly reducing the problem to finding a minimum-weight stable matching. This approach efficiently solves major practical challenges, such as minimizing overlapping diversity quota violations and maximizing sibling co-assignments. Coupled with our specialized Diversity-Constrained Deferred Acceptance (DC-DA) algorithm, these results equip market designers with flexible tools to achieve modern distributional goals without sacrificing the core fairness guarantee of stability.

\bibliographystyle{plainnat}
\bibliography{references}  

\end{document}